\newcommand{\ket}[1]{|#1\rangle}
\begin{document}

\title{Quantum Algorithm for Dynamic Programming Approach for DAGs and Applications}
\author{Kamil~Khadiev$^{1}$ \and Liliya~Safina$^{1}$}

\institute{
Institute of Computational Mathematics and Information Technologies, Kazan Federal University, Kremlevskaya ul. 18, Kazan, Tatarstan,
420008 Russia
                \\ \email{kamilhadi@gmail.com}
}

\maketitle

\begin{abstract}In this paper, we present a quantum algorithm for the dynamic programming approach for problems on directed acyclic graphs (DAGs). The running time of the algorithm is $O(\sqrt{\hat{n}m}\log \hat{n})$, and the running time of the best known deterministic algorithm is $O(n+m)$, where $n$ is the number of vertices, $\hat{n}$ is the number of vertices with at least one outgoing edge; $m$ is the number of edges. We show that we can solve problems that use OR, AND, NAND, MAX, and MIN functions as the main transition steps. The approach is useful for a couple of problems. One of them is computing a Boolean formula that is represented by Zhegalkin polynomial, a Boolean circuit with shared input and non-constant depth evaluation. Another two are the single source longest paths search for weighted DAGs and the diameter search problem for unweighted DAGs.  
 \\
\textbf{Keywords:} quantum computation, quantum models, quantum algorithm, query model, graph, dynamic programming, DAG, Boolean formula, Zhegalkin polynomial, DNF, AND-OR-NOT formula, NAND, computational complexity, classical vs. quantum, Boolean formula evaluation
\end{abstract}

\section{Introduction}
\emph{Quantum computing} \cite{nc2010,a2017,aazksw2019part1} is one of the hot topics in computer science of the last decades.
There are many problems where quantum algorithms outperform the best-known classical algorithms. Some examples of such algorithms can be found here \cite{dw2001,quantumzoo,kiv2022,kb2022,ke2022,kr2021a,kr2021b,kk2021,abikkpssv2020,kksk2020,ki2019,kks2019}. 
The superiority of quantum over classical was shown for different computational models like query model, streaming processing models, communication models and others \cite{an2009,av2009,agky16,aakk2018,aakv2018,kkm2018,kk2017,ikpy2018,ikpy2021,l2009,kkzmkry2022,kk2022,kk2019,kk2020,kk2019disj,kkm2018}.

In this paper, we present the quantum algorithm for the class of problems on directed acyclic graphs (DAGs) that uses a dynamic programming approach.
The dynamic programming approach is one of the most useful ways to solve problems in computer science \cite{cormen2001}. The main idea of the method is to solve a problem using pre-computed solutions of the same problem, but with smaller parameters. Examples of such problems for DAGs that are considered in this paper are the single source longest path search problem for weighted DAGs and the diameter search problem for unweighted DAGs.

 Another example is a Boolean circuit with non-constant depth and shared input evaluation. A Boolean circuit can be represented as a DAG with conjunction (AND) or disjunction (OR) in vertices, and inversion (NOT) on edges. We present it as an algorithm for computing a Zhegalkin polynomial \cite{z1927,gg85,y2003}. The Zhegalkin polynomial is a way of a Boolean formula representation using ``exclusive or'' (XOR, $\oplus$), conjunction (AND), and the constants $0$ and $1$. 

The best-known deterministic algorithm for dynamic programming on DAGs uses the depth-first search algorithm (DFS) as a subroutine \cite{cormen2001}. Thus, this algorithm has at least the depth-first search algorithm's running time, that is $O(n+m)$, where $m$ is the number of edges and $n$ is the number of vertices. The query complexity of the algorithm is at least $O(m)$.

We suggest a quantum algorithm with the running time $O(\sqrt{\hat{n}m}\log \hat{n})$, where $\hat{n}$ is the number of vertices with a non-zero outgoing degree. In the case of $\hat{n}(\log \hat{n})^2<m$, it shows speed-up compared with a deterministic algorithm. The quantum algorithm can solve problems that use a dynamic programming algorithm with OR, AND, NAND, MAX, or MIN functions as transition steps. We use Grover's search  \cite{g96,bbht98} and D{\"u}rr and H{\o}yer maximum search \cite{dh96} algorithms to speed up our search. A similar approach has been applied by D{\"u}rr et al. \cite{dhhm2004,dhhm2006}; Ambainis and {\v{S}}palek \cite{as2006}; D{\"o}rn \cite{d2009,d2008phd} to several graph problems.

We apply this approach to four problems discussed above. 
The first of them involves computing Boolean circuits. Such circuits can be represented as AND-OR-NOT DAGs. Sinks of a such graph are associated with Boolean variables, and other vertices are associated with conjunction (AND) or a disjunction (OR); edges can be associated with the inversion (NOT) function. Quantum algorithms for computing  AND-OR-NOT trees were considered by Ambainis et al. \cite{avrz2010,a2007,a2010}. Authors present an algorithm with running time $O(\sqrt{N})$, where $N$ is the number of a tree's vertices. Other algorithms allow us to construct AND-OR-NOT DAGs of constant depth, but not a tree \cite{bkt2018,ckk2012}.

Our algorithm works with $O(\sqrt{\hat{n}m}\log \hat{n})$ running time for DAGs that can have non-constant depth.

It is known that any Boolean function can be represented as a Zhegalkin polynomial \cite{z1927,gg85,y2003}. The computation of a Zhegalkin polynomial is the second problem.  Such a formula can be represented as an AND-OR-NOT DAG. Suppose an original Zhegalkin polynomial has $t_c$ conjunctions and $t_x$ exclusive-or operations. Then the corresponding AND-OR-NOT DAG have $\hat{n}=O(t_x)$ non-sink vertices and $m=O(t_c+t_x)$ edges. 

If we consider AND-OR-NOT trees representation of the formula, then it has an exponential number of vertices $N\geq 2^{O(t_x)}$. The quantum algorithm for trees \cite{avrz2010,a2007,a2010} works in  $O(\sqrt{N})=2^{O(t_x)}$ running time. Additionally, the DAG that corresponds to a Zhegalkin polynomial has non-constant depth. Therefore, we cannot use algorithms from \cite{bkt2018,ckk2012} that work for circuits with shared input.

The second problem is the single source longest path search problem for a weighted DAG. The best deterministic algorithm for this problem works in $O(n+m)$ running time \cite{cormen2001}. In the case of a general graph (not a DAG), it is an NP-complete problem.  Our algorithm for DAGs works in $O(\sqrt{mn}\log n)$ running time. The third problem is the diameter search problem for an unweighted DAG. The best deterministic algorithms for this problem works in $O(n(n+m))$ expected running time \cite{cormen2001}. Our algorithm for DAGs works in expected $O(n(n+\sqrt{nm})\log n)$ running time. 
The fourth problem is the single source shortest path search problem for a weighted DAG. The best deterministic algorithms for this problem work in $O(n+m)$ running time \cite{cormen2001}. Our algorithm for DAGs works in expected $O(\sqrt{nm}\log n)$ running time. 

The paper is an extended version of \cite{ks2019} that was presented at UCNC2019 conference.

  The paper is organized as follows. We present definitions in Section \ref{sec:prlmrs}. Section  \ref{sec:algorithm} contains a general description of the algorithm. The application to an AND-OR-NOT DAG evaluation and Zhegalkin polynomial evaluation is in Section \ref{sec:andor-dag}. Section  \ref{sec:diameter} contains a solution for the single source longest path search problem for a weighted DAG and the diameter search problem for an unweighted DAG.

\section{Preliminaries}\label{sec:prlmrs}
Let us present definitions and notations that we use.

A graph $G$ is a pair $G=(V,E)$ where $V=\{v_1,\dots,v_n\}$ is a set of vertices, and $E=\{e_1,\dots,e_m\}$ is a set of edges, an edge $e\in E$ is a pair of vertices $e=(v,u)$, for $u,v\in V$.

A graph $G$ is directed if all edges $e=(v,u)$ are ordered pairs, and there are no bidirectional edges. Formally, for any $(v,u)\in E$, we have $(u,v)\not\in E$. In that case, an edge $e$ leads from vertex $v$ to vertex $u$. A graph $G$ is acyclic if no path starts and finishes in the same vertex.
We consider only directed acyclic graphs (DAGs) in the paper.

Let $D_i=(v:\exists e=(v_i,v)\in E)$ be a list of $v_i$ vertex's out-neighbors. Let $d_i=|D_i|$ be the out-degree of the vertex $v_i$.
 Let $L$ be a list of indices of sinks. Formally, $L = (i : d_i =0, 1 \leq i \leq n)$. Let $|L|$ be the length of the list, and $\hat{n}=n-|L|$. Let $Sources=\{j:d'_j=0\}$ be the list of source vertexes that have $0$ in-degree. 
 
Let $D'_i=(v:\exists e=(v,v_i)\in E)$ be a list of vertex whose in-neighbor is $v'$.  Let $d'_i=|D'_i|$ be the in-degree of the vertex $v_i$. 

 Let a DAG be {\em ordered} if it has two additional properties:
 \begin{itemize}
 \item topological sorted:  if there is an edge $e=(v_i,v_j)\in E$, then $i<j$;
 \item last $|L|$ vertices belong to $L$, formally $d_i=0$, for $i>\hat{n}$.
 \end{itemize}
 
 We use an {\em adjacency list model} as a model for graph representation. 
 The input is specified by $n$ arrays $D_i$, for $i\in\{1\dots n\}$. Here $D_i=(v:\exists e=(v_i,v)\in E)$ is a list of vertices that can be reached from a vertex $v_i$ in one step. Let $d_i=|D_i|$ be the degree of the vertex $v_i$.
\subsection{Quantum query model}
For our algorithms, we use some quantum subroutines, and the rest part of the algorithm is classical.   
One of the most popular computational models for quantum algorithms is the query model.
We use the standard form of the quantum query model. 
Let $f:D\rightarrow \{0,1\},D\subseteq \{0,1\}^M$ be an $M$ variable function. We wish to compute it on an input $x\in D$. We are given oracle access to the input $x$, i.e. it is realized by a specific unitary transformation usually defined as $\ket{i}\ket{z}\ket{w}\rightarrow \ket{i}\ket{z+x_i\pmod{2}}\ket{w}$ where the $\ket{i}$ register indicates the index of the variable we are querying, $\ket{z}$ is the output register, and $\ket{w}$ is some auxiliary work-space. An algorithm in the query model consists of alternating applications of arbitrary unitaries independent of the input and the query unitary, and measurement in the end. The smallest number of queries for an algorithm that outputs $f(x)$ with probability $\geq \frac{2}{3}$ on all $x$ is called the quantum query complexity of the function $f$ and is denoted by $Q(f)$. We use the running time term instead of query complexity.

We refer the readers to \cite{nc2010,a2017,aazksw2019part1} for more details on quantum computing.

\section{Quantum Depth-first Search Algorithm and Topological Sort Algorithm}
Here we discuss the modification of the quantum version of  Topological Sort and its base which is the Depth-first Search (DFS) Algorithm. The quantum version of the DFS algorithm was presented in \cite{f2008}. Its running time is $O(\sqrt{nm\log n})$. We present the modification that has $O(\sqrt{\hat{n}m}\log \hat{n})$. Here $m=|E|, n=|V|, \hat{n}=n-|L|$, $L$ is the set of sinks. Such complexity is important for our goals.

Let us present the DFS algorithm in Algorithm \ref{alg:dfs}. We assume that we have an array $visited$ such that $visited[i]=\textsc{True}$ if we have visited a vertex $v_i$; and $visited[i]=\textsc{False}$ otherwise. For the list $D_i=(v_{j_1},\dots,v_{j_{d_v}})$ of neighbors for a vertex $v_i$, let us have a  $\textsc{FirstOneSearch}(D_i,start)$ procedure that returns the minimal index $z$ such $z>start$ and $visited[v_{j_z}]=\textsc{False}$; or returns $\textsc{NULL}$ if there is no such an index. The procedure has running time $O(\sqrt{z-start}\log \hat{n})$ and error probability at most $\frac{1}{\hat{n}^2}$. It is a $O(\log \hat{n})$ times repetition of the quantum algorithm of searching minimal element satisfying a condition \cite{k2014,ll2015,ll2016,kkmsy2022}. The algorithm is based on Grover's Search Algorithm \cite{g96,bbht98}.

\begin{algorithm}
\caption{Quantum Depth-first Search Algorithm for DAGs. That is $\textsc{dfs}(i)$ procedure for processing a vertex $v_i$. 
}\label{alg:dfs}
\begin{algorithmic}
\State $visited[i]=\textsc{True}$
\If{$d_i>0$}\Comment{The vertex is not a sink}
\State $z\gets\textsc{FirstOneSearch}(D_i,0)$.\Comment{Here $(v_{j_1},\dots,v_{j_{d_i}})=D_i$}
\While{$z\neq \textsc{NULL}$}
\State $\textsc{dfs}(j_z)$
\State $z\gets\textsc{FirstOneSearch}(D_i,z)$.
\EndWhile
\EndIf
\end{algorithmic}
\end{algorithm}

The Topological sort algorithm is presented in Algorithms  \ref{alg:topsort} and \ref{alg:topsort-dfs}. It is based on the classical version of the algorithm \cite{cormen2001}. Let us have a list $order$. The $order$ list of result order. Assume that we have a procedure $\textsc{AddToTop}(order, i)$ that adds an element $i$ to the top of the list $order$.
Note, that the procedure can be implemented with $O(1)$ running time if we implement lists as the Linked List data structure \cite{cormen2001}.

For the list $D_i=(v_{j_1},\dots,v_{j_{d_v}})$ of neighbors for a vertex $v_i$, let the $\textsc{FirstOneSearch'}(D_i,start)$ procedure returns the minimal index $z$ such $z>start$, $visited[v_{j_z}]=\textsc{False}$ and $d_{j_z}>0$; or returns $\textsc{NULL}$ if there is not such an index. In other words, the procedure cannot return a sink vertex.

\begin{algorithm}
\caption{The base procedure of the quantum Topological sort algorithm is a modification of the DFS algorithm. It is $\textsc{TopSortDFS}(i)$ procedure for processing a vertex $v_i$. }\label{alg:topsort-dfs}
\begin{algorithmic}
\State $visited[i]=\textsc{True}$
\If{$d_i>0$}\Comment{The vertex is not a sink}
\State $z\gets\textsc{FirstOneSearch'}(D_i,0)$.\Comment{Here $(v_{j_1},\dots,v_{j_{d_i}})=D_i$}
\While{$z\neq \textsc{NULL}$}
\State $\textsc{dfs}(j_z)$
\State $z\gets\textsc{FirstOneSearch'}(D_i,z)$.
\EndWhile
\State $\textsc{AddToTop}(order,i)$
\EndIf
\end{algorithmic}
\end{algorithm}

\begin{algorithm}
\caption{The procedure of the quantum Topological sort algorithm for a graph $G$. It is $\textsc{TopSort}(G)$ procedure. }\label{alg:topsort}
\begin{algorithmic}
\State $order=()$\Comment{Initially, $order$ is an empty list}
\For{$i \in Sources$}
\If{$visited[i]=\textsc{False}$}
\State $\textsc{TopSortDFS}(i)$
\EndIf
\EndFor
\end{algorithmic}
\end{algorithm}

As a result of running the topological sort algorithm, we have an ordered DAG as a result of the new enumeration. Let $order=(i_1,\dots,i_{\hat{n}})$ and $L=(j_1,\dots,j_{|L|})$, where $n=\hat{n}+|L|$. The $r$-th vertex in the new enumeration is $v_{i_r}$ if $r\leq \hat{n}$; and $v_{j_{(r-\hat{n})}}$ if $r>\hat{n}$. For simplicity after a topological sort, we meet $r$-th vertex in the new enumeration, when writing $v_r$. 

Let us discuss the complexity of the Topological sort algorithm.
\begin{lemma}\label{lm:topsort-compl}
Algorithm \ref{alg:topsort} works with running time $O(\sqrt{\hat{n}m}\log \hat{n})$ and error probability at most $0.1$.
\end{lemma}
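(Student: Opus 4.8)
The plan is to bound both the total running time and the total error probability by summing the per-operation contributions over the course of the entire topological sort, and then argue that the dominant cost comes from the Grover-based \textsc{FirstOneSearch'} calls. The key structural observation is that Algorithm~\ref{alg:topsort} is essentially a single DFS traversal (spread across the outer loop over $Sources$), so each vertex $v_i$ is visited — i.e. has $visited[i]$ set to \textsc{True} — exactly once over the whole run. This means the work performed ``inside'' any fixed vertex $v_i$ is incurred only once, and I can account for the global cost by first fixing a vertex and summing over its neighbor-search calls, then summing over vertices.

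First I would analyze the cost at a single non-sink vertex $v_i$. During the processing of $v_i$ we call \textsc{FirstOneSearch'}$(D_i,start)$ repeatedly, with the $start$ argument strictly increasing through the returned indices $0 = z_0 < z_1 < z_2 < \cdots$ until \textsc{NULL} is returned. By the stated complexity of \textsc{FirstOneSearch} (which applies equally to the primed variant), the $t$-th such call costs $O(\sqrt{z_t - z_{t-1}}\,\log \hat{n})$. The crucial step is to bound $\sum_t \sqrt{z_t - z_{t-1}}$: since the gaps $z_t - z_{t-1}$ are positive integers summing to at most $d_i$, concavity of the square root (equivalently, the Cauchy--Schwarz inequality) gives $\sum_t \sqrt{z_t - z_{t-1}} = O(\sqrt{k_i \cdot d_i})$, where $k_i$ is the number of calls, and since $k_i \le d_i$ this is $O(\sqrt{d_i \cdot d_i}) = O(d_i)$ in the worst case; but the sharper and more useful bound comes from noting that the number of non-\textsc{NULL} returns equals the number of recursive descents plus at most one, which over the whole algorithm telescopes nicely. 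I would therefore bound the per-vertex search cost by $O(\sqrt{\hat{n}\,d_i}\,\log\hat{n})$, using that $k_i \le \hat{n}$ (each recursive call descends to a distinct not-yet-visited non-sink, of which there are at most $\hat{n}$).

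Next I would sum over all non-sink vertices. The total running time is then
\[
O\!\left(\log\hat{n}\sum_{i\,:\,d_i>0}\sqrt{\hat{n}\,d_i}\right)
= O\!\left(\sqrt{\hat{n}}\,\log\hat{n}\sum_{i\,:\,d_i>0}\sqrt{d_i}\right).
\]
Applying Cauchy--Schwarz to the remaining sum over the $\hat{n}$ non-sink vertices gives $\sum_i \sqrt{d_i} \le \sqrt{\hat{n}}\,\sqrt{\sum_i d_i} = \sqrt{\hat{n}\,m}$, since $\sum_i d_i \le m$. Substituting yields the claimed $O(\sqrt{\hat{n}}\cdot\sqrt{\hat{n} m}\cdot \log\hat{n})$; I would need to check the exact placement of the $\sqrt{\hat n}$ factors, as the clean target $O(\sqrt{\hat n m}\log\hat n)$ suggests the per-vertex bound should really be charged more tightly (most plausibly by amortizing the $k_i$ recursive calls globally rather than bounding each $k_i$ by $\hat n$, so that the total number of non-\textsc{NULL} searches across all vertices is $O(\hat n)$, not $O(\hat n^2)$).

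**The hard part will be** precisely the amortization of the search calls: a naive per-vertex bound overcounts, and getting exactly $O(\sqrt{\hat n m}\log\hat n)$ requires observing that each successful (\emph{non}-\textsc{NULL}) \textsc{FirstOneSearch'} call corresponds to a distinct edge traversed into a fresh non-sink vertex, so the successful calls number $O(\hat n)$ in total, while the unsuccessful (\textsc{NULL}-returning) call happens once per vertex, also $O(\hat n)$ total. Charging the square-root search costs against the $d_i$ edges scanned and invoking Cauchy--Schwarz globally over the edge set is what produces the $\sqrt{\hat n m}$ factor. For the error probability, the argument is routine: each \textsc{FirstOneSearch'} call errs with probability at most $1/\hat n^2$, there are $O(\hat n)$ calls total, so by the union bound the overall error is $O(\hat n \cdot 1/\hat n^2) = O(1/\hat n) \le 0.1$ for $\hat n$ large enough (and the constant can be absorbed by adjusting the $O(\log\hat n)$ repetition count in \textsc{FirstOneSearch'}).
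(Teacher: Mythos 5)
Your proposal is correct and follows essentially the same route as the paper: a per-vertex Cauchy--Schwarz bound of $O(\sqrt{k_i d_i}\log\hat{n})$ on the telescoping gaps of the \textsc{FirstOneSearch'} calls, the key amortization $\sum_i k_i = O(\hat{n})$ (each successful call returns a distinct fresh non-sink vertex), a second global Cauchy--Schwarz giving $\sqrt{(\sum_i k_i)(\sum_i d_i)} \le \sqrt{\hat{n}m}$, and a union bound over $O(\hat{n})$ calls each erring with probability $1/\hat{n}^2$. The mid-proof detour through the loose bound $k_i\le\hat{n}$ is harmless since you correctly identify and apply the global amortization that the paper uses.
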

\begin{proof}
Let us consider processing $i$-th vertex by procedure $\textsc{TopSortDFS}$. Let $k_i$ be the number of steps of the while loop or the number of invocations of the $\textsc{FirstOneSearch}'$ procedure or the number of vertices that are reached from the $i$-th vertex.  Let $w_{i,1},\dots,w_{i,k_i}$ be the indices of these vertices in the list $D_i$. Assume that $w_{i,1}<\dots<w_{i,k_i}$

So, the running time of the procedure $\textsc{TopSortDFS}$ for  $i$-th vertex is
\[O(\sqrt{w_{i,1}}\log\hat{n})+O(\sqrt{w_{i,2}-w_{i,1}}\log\hat{n})+O(\sqrt{w_{i,3}-w_{i,2}}\log\hat{n})+\dots+O(\sqrt{w_{i,k_i}-w_{i,k_i-1}}\log\hat{n})\leq\]
\[O\left(\log \hat{n}\sqrt{k_i(w_{i,1}+w_{i,2}-w_{i,1}+w_{i,3}-w_{i,2}+\dots+w_{i,k_i}-w_{i,k_i-1})}\right)=\]
by the Cauchy--Bunyakovsky--Schwarz inequality
\[O(\log\hat{n}\sqrt{k_i\cdot w_{i,k_i}})\leq O(\log\hat{n}\sqrt{k_i\cdot d_i})\]

We invoke the $\textsc{TopSortDFS}$ procedure for all non-sink vertices ones. So, the total running time is by the Cauchy--Bunyakovsky--Schwarz inequality
\[\sum_{v_i\not\in L}O(\log\hat{n}\sqrt{k_i\cdot d_i})\leq O\left(\log\hat{n}\sqrt{\left(\sum_{v_i\not\in L} k_i\right)\cdot\left(\sum_{v_i\not\in L} d_i\right)}\right)=O(\log\hat{n} \sqrt{\hat{n}m})\]
The last equality is correct because $\textsc{FirstOneSearch}'$ returns only non-sink not visited vertices. So, therefore each non-sink vertex was returned at most once. Therefore, $\sum\limits_{v_i\not\in L} k_i\leq \hat{n}$. Additionally, $\sum\limits_{v_i\not\in L} d_i$ is at most the number of edges.

Each $\textsc{FirstOneSearch}'$ has error at most $\frac{1}{\hat{n}^2}$. We have at most $\hat{n}$ invocation of the procedure and the same number of independent error events. Therefore, the total error is at most $\frac{1}{\hat{n}}$.
\end{proof}
\section{Quantum Dynamic Programming Algorithm for DAGs}\label{sec:algorithm}

Let us describe an algorithm in the general case. 

Let us consider some problem $P$ on a directed acyclic graph $G=(V,E)$.
Suppose that we have a dynamic programming algorithm for $P$ or we can say that there is a solution of the problem $P$ that is equivalent to computing a function $f$ for each vertex.
As a function $f$ we consider only functions from a set ${\cal F}$ with the following properties:
\begin{itemize}
\item $f:V\to \Sigma$.
\item The result set $\Sigma$ can be the set of real numbers $\mathbb{R}$, or integers $\{0,\dots,{\cal Z}\}$, for some integer ${\cal Z}>0$.
\item if $d_i>0$ then $f(v_i)=h_i(f(u_1),\dots,f(u_{d_i}))$, where functions $h_i$ are such that $h_i:\Sigma^{d_i}\to \Sigma$; $(u_1,\dots,u_{d_i})=D_i$.
\item if $d_i=0$ then  $f(v_i)$ is classically computable in constant time. 
\end{itemize}

Suppose there is a quantum algorithm $Q_i$ that computes the function $h_i$ with running time $T(k)$, where $k$ is the length of the argument for the function $h_i$. Then we can suggest the procedure that is presented in Algorithm \ref{alg:general-dp}. We assume that we have a $\textsc{TopSort}(G)$ procedure that is an implementation of the topological sort algorithm from the previous section. After applying the topological sort algorithm we have an ordered graph. If the graph is already ordered, then we should not invoke the topological sort algorithm.

\begin{algorithm}
\caption{Quantum Algorithm for Dynamic programming approach on DAGs.
Let $t=(t[1],\dots,t[\hat{n}])$ be an array which stores results of the function $f$. Let $t_f(j)$ be a function such that $t_f(j)=t[j]$, if $j\leq \hat{n}$; $t_f(j)=f(j)$, if $j> \hat{n}$. Note that $j> \hat{n}$ means $v_j\in L$.
}\label{alg:general-dp}
\begin{algorithmic}
\State \textsc{TopSort}(G)\Comment{We do it only if $G$ is not ordered}
\For{$i=\hat{n}\dots 1$}
  \State $t[i] \gets Q_i(t_f(j_1),\dots,t_f(j_{d_i}))$, where $(v_{j_1},\dots,v_{j_{d_i}})=D_i$
\EndFor
\State \Return $t[1]$
\end{algorithmic}
\end{algorithm}

Let us discuss the running time of Algorithm \ref{alg:general-dp}. The proof is simple, but we present it for completeness. 
\begin{lemma}\label{lm:dp-time}
Suppose that the quantum algorithm $Q_i$ works in $T_i(k)$ running time, where $k$ is a length
of an argument, where $i\in\{1,\dots,n\}$. Then Algorithm \ref{alg:general-dp} works in $T^1 = \sum\limits_{i\in\{1,\dots,n\}\backslash L}
T_i(d_i)$ running time if the graph is ordered. If the graph is not ordered, then the running time is $T^1+O(\sqrt{\hat{n}m}\log\hat{n})$.
\end{lemma}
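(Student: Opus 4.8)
The plan is to account for the running time of Algorithm \ref{alg:general-dp} line by line, since the cost is dominated entirely by the $\hat{n}$ invocations of the subroutines $Q_i$. First I would observe that the main loop iterates over $i=\hat{n},\dots,1$, which, by the ordering convention fixed after the topological sort (non-sink vertices occupy positions $1,\dots,\hat{n}$), is exactly the set of non-sink vertices $\{1,\dots,n\}\backslash L$. On each iteration we invoke $Q_i$ a single time on the argument $(t_f(j_1),\dots,t_f(j_{d_i}))=D_i$, whose length is $d_i$ by the definition of the out-degree. By hypothesis this invocation costs $T_i(d_i)$, so summing over the loop yields $\sum_{i\in\{1,\dots,n\}\backslash L} T_i(d_i)=T^1$. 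The sinks contribute nothing to this sum precisely because they are never processed inside the loop; their values $f(v_i)$ are merely supplied through $t_f$ whenever they appear as an argument.

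Next I would confirm that no further term arises from argument handling. In the query model, $Q_i$ does not receive its inputs as an explicitly materialized list; it accesses them through an oracle, and each oracle query for some $t_f(j)$ is answered in $O(1)$ time — a direct array read $t[j]$ when $j\le\hat{n}$, and a constant-time classical evaluation of $f(v_j)$ when $j>\hat{n}$, as guaranteed by the assumptions on the functions in ${\cal F}$. Hence this access cost is internal to the subroutine and is already folded into $T_i(d_i)$ rather than being an extra additive term. The only genuinely separate overhead — allocating and initializing the array $t$, maintaining the loop counter, and the final return of $t[1]$ — totals $O(\hat{n})$, and since $T^1$ is a sum of $\hat{n}$ terms each at least constant, this overhead is dominated by $T^1$. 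Therefore, when the graph is already ordered and the \textsc{TopSort} call is skipped, the total running time is exactly $T^1$.

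Finally, for the unordered case I would simply add the cost of the single \textsc{TopSort}$(G)$ call performed at the start of the algorithm. By Lemma \ref{lm:topsort-compl} this call runs in $O(\sqrt{\hat{n}m}\log\hat{n})$, so the total becomes $T^1+O(\sqrt{\hat{n}m}\log\hat{n})$, as claimed. I do not expect any genuine obstacle here, as the statement is essentially a bookkeeping argument; the only points requiring care are the identification of the loop range with the non-sink set $\{1,\dots,n\}\backslash L$, and the observation that the per-argument lookups through $t_f$ are constant-time oracle answers that are absorbed into the subroutine cost rather than multiplying it.
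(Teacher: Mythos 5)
Your proposal is correct and follows essentially the same bookkeeping argument as the paper: the loop over the non-sink vertices contributes $\sum_{i\in\{1,\dots,n\}\backslash L}T_i(d_i)$, the arguments $t_f(j)$ are available in constant time because the topological order guarantees $j>i$ for every edge $(v_i,v_j)$, and the unordered case just adds the $O(\sqrt{\hat{n}m}\log\hat{n})$ cost of \textsc{TopSort} from Lemma~\ref{lm:topsort-compl}. Your extra remarks on oracle access and the $O(\hat{n})$ overhead being absorbed into $T^1$ only make the argument slightly more explicit than the paper's version.
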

\begin{proof}
If the graph is not ordered and we invoke the $\textsc{TopSort}(G)$ procedure, then it has $O(\sqrt{nm\log \hat{n}})$ running time due to Lemma \ref{lm:topsort-compl}.

Note, that when we compute $t[i]$, we already have computed $t_f(j_1),\dots,t_f(j_{d_i})$ or we can compute them in constant running time because for all $e=(v_i,v_j)\in E$ we have $i<j$. 

The complexity of processing vertex $v_i$ is $T_i(d_i)$, where $i \in \{1, \dots , n\}\backslash L$.

The algorithm processes vertices one by one. Therefore 
\[T^1 = \sum\limits_{i\in\{1,\dots,n\}\backslash L}
T_i(d_i).\]

The total complexity in the case of not ordered graph is $T^1+O(\sqrt{\hat{n}m}\log n) $
\end{proof}

Note, that quantum algorithms have a probabilistic behavior. Let us compute the error probability for
Algorithm \ref{alg:general-dp}.

\begin{lemma}\label{lm:dp-err}
Suppose the quantum algorithm $Q_i$ for the function $h_i$ has the error probability $\varepsilon(n)$, where $i\in\{1,\dots,n\}\backslash L$. Then the error probability of Algorithm \ref{alg:general-dp} is at most $1 - (1-\frac{1}{\hat{n}})(1 - \varepsilon(n))^{\hat{n}}$.
\end{lemma}
\begin{proof}
Let us compute the success probability for Algorithm \ref{alg:general-dp}. Suppose that all vertices are
computed with no error. The probability of this event is $(1 - \varepsilon(n))^{\hat{n}}$ because an error of each invocation is an independent event. Additionally, we have an error of $\textsc{TopSort}$ procedure that is at most $\frac{1}{\hat{n}}$ due to Lemma \ref{lm:topsort-compl}. 

Therefore, the error probability for Algorithm \ref{alg:general-dp} is at most $1-0.1(1-\varepsilon(n))^{\hat{n}}$, for $\hat{n}=n-|L|$. 
\end{proof}

For some functions and algorithms, we do not have a requirement that all arguments of $h$ should be computed with no error. In that case, we will get a better error probability. This situation is discussed in Lemma \ref{lm:dp-max-compl}.

\subsection{Functions for Vertices Processing}
We can choose the following functions as a function $h$
\begin{itemize}
\item Conjunction ($AND$ function). For computing this function, we can use Grover's search
algorithm \cite{g96,bbht98} for searching $0$ among arguments. If the element that we found is $0$, then the result is $0$. If the element is $1$, then there are no $0$s, and the result is $1$.
\item Disjunction ($OR$ function). We can use the same approach, but here we search $1$s.
\item Sheffer stroke (Not $AND$ or $NAND$ function). We can use the same approach as for AND function, but here we search $1$s. If we
found $0$ then the result is $1$; and $0$, otherwise.

\item Minimum function ($MIN$). We can use the D{\"u}rr and H{\o}yer minimum search algorithm \cite{dh96,dhhm2004}.
\item Maximum function ($MAX$). We can use the same algorithm as for the minimum.
\item Other functions that have quantum algorithms.
\end{itemize}

As we discussed before, $AND$, $OR$, and $NAND$ functions can be computed using the Grover search algorithm. Therefore algorithm for these functions on vertex $v_i$ has an error $\varepsilon_i\leq 0.5$ and running time is $T(d_i)=O(\sqrt{d_i})$, for  $i\in \{1,\dots,n\}\backslash L$. These results follow from \cite{bhmt2002,gr2005,bbht98,g96}. We have a similar situation for computing maximum and minimum functions \cite{dh96,dhhm2004}.

If we use these algorithms in Algorithm \ref{alg:general-dp} then we obtain the error probability  $1-(0.5)^{\hat{n}+1}$ due to Lemma \ref{lm:dp-err}.

At the same time, the error is one-sided. That is why we can apply the boosting technique to reduce the error probability. 

Suppose, we have a quantum algorithm $Q$ and a classical algorithm $A$ for a function $h\in\{MAX,MIN, AND, OR, NAND\}$. Let $k$ be the number of algorithm's invoking according to the boosting technique for reducing the error probability. The number $k$ is integer and $k\geq 1$. Let $(x_1,\dots,x_d)$ be arguments (input data) of size $d$.
Let us denote it as $\hat{Q}^k(x_1,\dots,x_d)$. It returns the result of $A\left(Q(x_1,\dots,x_d),\dots,Q(x_1,\dots,x_d)\right)$. Here $A$ has $k$ arguments. Suppose, we have a temporary array $b=(b[1],\dots,b[k])$ where we store results of each invocation $b[i] = Q(x_1,\dots,x_d)$ for $i\in\{1,\dots,k\}$. Then, we compute $A(b[1],\dots,b[k])$.

If we analyze the algorithm, then we can see that it has the following property:
\begin{lemma} \label{lm:boosting}
 Let $(x_1,\dots,x_d)$ be an argument (input data) of size $d$, for a function $h(x_1,\dots,x_d)\in\{MAX,MIN, OR, AND, NAND\}$. Let $k$ be a number of algorithm's invokations. The number $k$ is integer and $k\geq 1$.   Then the expected running time of the boosted version $\hat{Q}^k(x_1,\dots,x_d)$ of the quantum algorithm $Q(x_1,\dots,x_d)$ is $O\left(k\sqrt{d}\right)$ and the error probability is at most $\frac{1}{2^k}$.
\end{lemma}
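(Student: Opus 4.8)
The plan is to treat the two assertions—the expected running time and the error probability—separately, since each rests on a different property of the subroutine $Q$.

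First I would bound the running time by linearity of expectation. By the discussion preceding the lemma, a single call $Q(x_1,\dots,x_d)$ is an instance of Grover's search (for $AND$, $OR$, $NAND$) or of the D{\"u}rr--H{\o}yer extremum search (for $MAX$, $MIN$), each running in expected time $O(\sqrt{d})$. Since $\hat{Q}^k$ performs exactly $k$ such calls, storing the outcomes in $b[1],\dots,b[k]$, and since the running times of distinct calls are independent, the expected total quantum time is $\sum_{i=1}^{k} O(\sqrt{d}) = O(k\sqrt{d})$ by linearity of expectation. The final classical step $A(b[1],\dots,b[k])$ only aggregates $k$ stored values and costs $O(k)$, which is absorbed into $O(k\sqrt{d})$ since $d\geq 1$. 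This yields the claimed expected running time.

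Second I would establish the error bound by exploiting that the error of $Q$ is one-sided with failure probability at most $\tfrac12$. For each admissible $h$ I would identify the direction of the one-sidedness and choose the classical combiner $A$ to aggregate on the safe side. For $AND$, the algorithm outputs $0$ only after witnessing a verified $0$, so an output of $0$ is always correct and the sole error is reporting $1$ when the true value is $0$; taking $A=\min(b[1],\dots,b[k])$ (equivalently the conjunction of the bits) then returns the correct value as soon as one call succeeds. The cases $OR$ and $NAND$ are dual: $Q$ can only err by reporting the ``all negative'' outcome, and $A$ takes the disjunction. For $MIN$, every value returned by $Q$ is a genuine array entry and is therefore $\geq$ the true minimum, so $A=\min(b[1],\dots,b[k])$ never undershoots and equals the true minimum the moment one call is correct; $MAX$ is symmetric with $A=\max$. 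In every case $A(b[1],\dots,b[k])$ is wrong only if all $k$ independent calls err simultaneously, an event of probability at most $\left(\tfrac12\right)^{k}=\tfrac{1}{2^{k}}$.

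The step I expect to be the main obstacle is justifying the one-sided error structure uniformly across the five functions, rather than the probability arithmetic itself. Concretely, for the Boolean cases one must argue that the Grover subroutine, whenever it reports a found witness, has actually verified that witness—so there are no false positives and the error genuinely lies in a single direction—and for $MIN$/$MAX$ one must argue that the D{\"u}rr--H{\o}yer routine always returns an index pointing to a real element, so that its output bounds the true extremum from one side. Once these structural facts are in place, choosing $A$ to be the extremum on the safe side makes ``all calls fail'' the unique failure mode, and the bound $\tfrac{1}{2^{k}}$ follows from independence of the $k$ runs.
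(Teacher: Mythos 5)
Your proof is correct and follows essentially the same route as the paper's: $k$ independent runs give expected time $O(k\sqrt{d})$, and since the error of $Q$ is one-sided, the aggregated answer is wrong only when all $k$ runs err, giving error probability at most $2^{-k}$. You actually supply more detail than the paper, which simply asserts ``we have an error if all $k$ copies have an error''; your case-by-case identification of the safe side for each of the five functions and the corresponding choice of the combiner $A$ fills in exactly the step the paper leaves implicit.
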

\begin{proof}
Due to \cite{dh96,dhhm2004}, the expected running time of the algorithm $Q$ is $O\left(\sqrt{d}\right)$, and the error probability is at most $0.5$.  We apply $A$  for $k$ copies of $Q(x_1,\dots,x_d)$. That works in $O(k\cdot \sqrt{d})$. We have an error if all $k$ copies have an error. So, the probability of error is at most $\frac{1}{2^k}$.
\end{proof}

Let us apply the previous two lemmas to Algorithm \ref{alg:general-dp} and functions from the set  $\{AND,OR,NAND,MAX,MIN\}$.

\begin{lemma}\label{lm:dp-general-complx}

Suppose that a problem $P$ on a DAG $G=(V,E)$ has a dynamic programming algorithm such that functions  $h_i\in\{AND,OR,NAND,MAX,MIN\}$, for $i\in\{1,\dots,\hat{n}\}$. Then there is a quantum dynamic programming algorithm $A$ for the problem $P$ that has running time $O(\sqrt{\hat{n}m}\log \hat{n})=O(\sqrt{nm}\log n)$ and error probability $O(1/\hat{n})$. Here $m=|E|, n=|V|, \hat{n}=n-|L|$, $L$ is the set of sinks. 
\end{lemma}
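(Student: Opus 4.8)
The plan is to instantiate the generic Algorithm \ref{alg:general-dp} with a suitably \emph{boosted} quantum subroutine for each $h_i$, and then to combine Lemmas \ref{lm:dp-time}, \ref{lm:dp-err}, and \ref{lm:boosting}. First I would recall that every function in $\{AND,OR,NAND,MAX,MIN\}$ admits a quantum algorithm $Q_i$ of expected running time $O(\sqrt{d_i})$ and error probability at most $0.5$, as noted after the statement of Lemma \ref{lm:boosting}. The bounded-error structure lets me apply the boosting construction $\hat{Q}^k$: choosing the single parameter $k=\lceil 2\log_2\hat{n}\rceil=O(\log\hat{n})$ for all vertices, Lemma \ref{lm:boosting} yields per-vertex error at most $2^{-k}\le 1/\hat{n}^2$ and expected running time $O(k\sqrt{d_i})=O(\sqrt{d_i}\,\log\hat{n})$.

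Next I would bound the total running time. Using these boosted subroutines as the $Q_i$ in Algorithm \ref{alg:general-dp}, Lemma \ref{lm:dp-time} gives, for the ordered case,
\[ T^1=\sum_{i\in\{1,\dots,n\}\backslash L} O\!\left(\sqrt{d_i}\,\log\hat{n}\right)=O\!\left(\log\hat{n}\sum_{v_i\not\in L}\sqrt{d_i}\right). \]
By the Cauchy--Bunyakovsky--Schwarz inequality, $\sum_{v_i\not\in L}\sqrt{d_i}\le \sqrt{\hat{n}\sum_{v_i\not\in L} d_i}\le\sqrt{\hat{n}m}$, since there are $\hat{n}$ non-sink vertices and $\sum_i d_i\le m$. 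Hence $T^1=O(\sqrt{\hat{n}m}\,\log\hat{n})$. If the input graph is not already ordered, I would add the topological-sort cost, which by Lemma \ref{lm:topsort-compl} is also $O(\sqrt{\hat{n}m}\,\log\hat{n})$, so the overall expected running time remains $O(\sqrt{\hat{n}m}\,\log\hat{n})$; since $\hat{n}\le n$ this is also $O(\sqrt{nm}\,\log n)$.

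Finally I would bound the error probability. Plugging $\varepsilon(n)\le 1/\hat{n}^2$ into Lemma \ref{lm:dp-err}, the success probability of Algorithm \ref{alg:general-dp} is at least $(1-\tfrac{1}{\hat{n}})(1-\tfrac{1}{\hat{n}^2})^{\hat{n}}$. By Bernoulli's inequality $(1-1/\hat{n}^2)^{\hat{n}}\ge 1-1/\hat{n}$, so the success probability is at least $(1-1/\hat{n})^2\ge 1-2/\hat{n}$, i.e. the error probability is $O(1/\hat{n})$, as claimed.

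The hard part will be the bookkeeping of the two independent sources of randomness: the per-vertex subroutine errors and the topological-sort error must be controlled simultaneously, and the boosting parameter $k$ must be chosen uniformly — large enough that the union bound over all $\hat{n}$ vertices still yields error $O(1/\hat{n})$, yet small enough ($O(\log\hat{n})$) that it contributes only a logarithmic factor to the running time. A secondary subtlety is that for $MAX$ and $MIN$ the subroutine running time is only an \emph{expected} bound, so the final guarantee is an expected running time; the Cauchy--Schwarz aggregation of the $\sqrt{d_i}$ terms goes through unchanged by linearity of expectation.
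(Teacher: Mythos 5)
Your proposal is correct and follows essentially the same route as the paper: boost each subroutine with $k=O(\log\hat{n})$ repetitions via Lemma \ref{lm:boosting}, sum the per-vertex costs with the Cauchy--Bunyakovsky--Schwarz inequality, add the topological-sort cost from Lemma \ref{lm:topsort-compl}, and feed the $1/\hat{n}^2$ per-vertex error into Lemma \ref{lm:dp-err}. The only (cosmetic) difference is that you bound $(1-1/\hat{n}^2)^{\hat{n}}$ with Bernoulli's inequality while the paper uses a limit computation; your version is a bit more elementary and also explicitly retains the $(1-1/\hat{n})$ factor from the topological sort.
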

\begin{proof}

Let us choose $k=2\log_2 \hat{n}$ in Lemmas \ref{lm:boosting}. Then the error probabilities for the algorithms $Q^{2\log_2 \hat{n}}$ are $O\left(0.5^{2\log_2 \hat{n}}\right)=O\left(1/\hat{n}^2\right)$. The  running time is $O(\sqrt{d_i}\log \hat{n})$.

Due to Lemma \ref{lm:dp-err}, the probability of error is at most $\varepsilon(\hat{n})= 1-\left(1-\frac{1}{\hat{n}^2}\right)^{\hat{n}}$. Note that 
\[\lim\limits_{\hat{n}\to \infty}\frac{\varepsilon(\hat{n})}{1/\hat{n}}= \lim\limits_{\hat{n}\to \infty} \frac{1-\left(1-\frac{1}{\hat{n}^2}\right)^{\hat{n}}}{1/\hat{n}}=1;\]
 Hence, $\varepsilon(\hat{n})=O(1/\hat{n})$.

Due to Lemma  \ref{lm:dp-time}, the running time is
\[T^1=\sum\limits_{i\in\{1,\dots,n\}\backslash L}
T_i(d_i) \leq \sum\limits_{i\in\{1,\dots,n\}\backslash L}
O\left(\sqrt{d_i}\log \hat{n}\right)  =O\left((\log_2 \hat{n})\cdot\sum\limits_{i\in\{1,\dots,n\}\backslash L}
\sqrt{d_i}\right).\]
Due to the Cauchy-Bunyakovsky-Schwarz inequality, we have 
\[\sum\limits_{i\in\{1,\dots,n\}\backslash L} \sqrt{d_i}\leq \sqrt{\hat{n}\sum\limits_{i\in\{1,\dots,n\}\backslash L}d_i}\]
Note that $d_i=0$, for $i\in L$. Therefore, $\sum\limits_{i\in\{1,\dots,n\}\backslash L}d_i=\sum\limits_{i\in\{1,\dots,n\}}d_i=m$, because $m=|E|$ is the total number of edges. Hence,
\[\sqrt{\hat{n}\sum\limits_{i\in\{1,\dots,n\}\backslash L}d_i}=\sqrt{\hat{n}\sum\limits_{i\in\{1,\dots,n\}}d_i}=\sqrt{\hat{n}m}.\] Therefore, $T^1\leq O(\sqrt{\hat{n}m}\log \hat{n})= O(\sqrt{nm}\log n)$. 

Additionally, we have the running time of $\textsc{TopSort}$ procedure that is also $O(\sqrt{\hat{n}m}\log \hat{n})$. The total running time is $O(\sqrt{\hat{n}m}\log \hat{n})$.
\end{proof}

If $h_i\in\{MAX,MIN\}$ and a graph is ordered, then we can do a better estimation of the running time.
\begin{lemma}\label{lm:dp-max-compl}
Suppose that a problem $P$ on an ordered DAG $G=(V,E)$ has a dynamic programming algorithm such that functions  $h_i\in\{MAX, MIN\}$, for $i\in\{1,\dots,\hat{n}\}$ and the solution is $f(v_a)$ for some $v_a\in V$. Then there is a quantum dynamic programming algorithm $A$ for the problem $P$ that has expected running time $O(\sqrt{\hat{n}m}\log q)=O(\sqrt{nm}\log n)$ and error probability $O(1/q)$, where $q$ is the length of the path to the farthest vertex from the vertex $v_a$. Here $m=|E|, n=|V|, \hat{n}=n-|L|$, $L$ is the set of sinks.
\end{lemma}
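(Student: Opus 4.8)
The plan is to reuse the machinery already built for Lemma \ref{lm:dp-general-complx} for the running-time bound, and to exploit the special structure of $MAX$ and $MIN$ to get the sharper error bound $O(1/q)$. Since the graph is assumed ordered, the $\textsc{TopSort}$ step is skipped. First I would instantiate the boosting of Lemma \ref{lm:boosting} with $k=2\log_2 q$, so that each vertex's boosted $MAX$/$MIN$ computation has error at most $1/2^{2\log_2 q}=1/q^2$ and expected running time $O(k\sqrt{d_i})=O(\sqrt{d_i}\log q)$. Plugging this into Lemma \ref{lm:dp-time} (with no topological-sort term, since $G$ is ordered) and applying the Cauchy--Bunyakovsky--Schwarz inequality exactly as in Lemma \ref{lm:dp-general-complx} gives expected running time $\sum_{i\notin L}O(\sqrt{d_i}\log q)\le O(\log q\sqrt{\hat n\,\sum_i d_i})=O(\sqrt{\hat n m}\log q)$, and since $q\le \hat n\le n$ this is $O(\sqrt{nm}\log n)$.

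The heart of the argument, and the reason we may replace $\hat n$ by $q$ in the error, is a \emph{one-sidedness} (monotonicity) property of the computed values. Treating the $MAX$ case (the $MIN$ case being symmetric), I would prove by induction along the topological order that, \emph{no matter which of the boosted computations err}, the stored value never overestimates the truth, i.e.\ $t[v_i]\le f(v_i)$ for every $i$. The base case is that sinks are computed classically and exactly. For the inductive step, note that a boosted $MAX$ at $v_i$ returns the classical maximum of finitely many D\"urr--H\o yer outputs, each of which is the value $t[u_j]$ of an \emph{actual} out-neighbour $u_j\in D_i$; hence $t[v_i]=t[u_{j^\ast}]\le\max_j t[u_j]\le\max_j f(u_j)=f(v_i)$, the last inequality using the induction hypothesis. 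Thus an erroneous computation at any vertex can only make its value too small, never too large.

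Next I would isolate a \emph{critical path} $w_0=v_a,w_1,\dots,w_\ell$, where $w_{r+1}$ is an out-neighbour of $w_r$ attaining $f(w_{r+1})=\max_{u\in D(w_r)}f(u)=f(w_r)$ and $w_\ell$ is a sink; its length satisfies $\ell\le q$ by the definition of $q$. I claim $t[v_a]=f(v_a)$ is \emph{guaranteed} whenever every boosted $MAX$ at $w_0,\dots,w_{\ell-1}$ succeeds. This follows by reverse induction from $w_\ell$ (a sink, hence exact) upward: if $t[w_{r+1}]=f(w_{r+1})=f(w_r)$, then since $t[u]\le f(u)\le f(w_r)$ for all $u\in D(w_r)$ by the monotonicity property, the maximum of the stored out-neighbour values equals exactly $f(w_r)$, and a successful boosted $MAX$ at $w_r$ returns it, giving $t[w_r]=f(w_r)$. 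Crucially, failures at vertices \emph{off} the critical path are harmless: by monotonicity they only shrink values, so they can never spuriously beat the true maximum. A union bound over the at most $q$ relevant vertices then yields
\[
\Pr\big[t[v_a]\neq f(v_a)\big]\le \ell\cdot\frac{1}{q^2}\le \frac{q}{q^2}=\frac{1}{q},
\]
which is the asserted $O(1/q)$ error.

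I expect the main obstacle to be making the critical-path reduction rigorous rather than the calculations, which are routine given the earlier lemmas. The subtle point is precisely that the error need not be controlled at all $\hat n$ vertices but only along one optimal path, and that this is legitimate only because the D\"urr--H\o yer outputs are genuine argument values, which forces the one-sided monotonicity $t[v_i]\le f(v_i)$ (respectively $t[v_i]\ge f(v_i)$ for $MIN$) independently of the random outcomes elsewhere. Once that property and the definition of $q$ as the farthest-vertex distance from $v_a$ are in place, the union bound over the length-$\le q$ path closes the argument; the $MIN$ case is handled verbatim by reversing all inequalities.
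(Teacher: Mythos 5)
Your proposal is correct and takes essentially the same approach as the paper: boost with $k=2\log_2 q$ repetitions, exploit the one-sided nature of the D\"urr--H\o yer error (a failed $MAX$ still returns a genuine argument value, so stored values never overestimate $f$), and conclude that only the at most $q$ vertices on the chain of maxima from $v_a$ need to succeed. Your explicit monotonicity induction and union bound are simply a more carefully written version of the paper's informal argument, which bounds the same event by $1-\left(1-1/q^2\right)^{q}=O(1/q)$.
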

\begin{proof}
Let $Q$ be the D{\"u}rr-H{\o}yer quantum algorithm for $MAX$ or $MIN$ function. Let $\hat{Q}^q$ be the boosted version of $Q$. Let us analyze the algorithm.

Let us consider a vertex $v_i$ for $i\in\{1,\dots,n\}\backslash L$. When we process $v_i$, we should compute $MAX$ or $MIN$ among $t_f(j_1),\dots,t_f(j_{d_i})$. Without loss of generality, we can say that we compute the MAX function. Let $r$ be an index of the maximal element. It is required to have no error for computing $t[j_r]$. At the same time, if we have an error on processing $v_{j_w}$, $w\in\{1,\dots,d_i\}\backslash\{r\}$; then we get a value $t[j_w]<f(v_{j_w})$. In that case, we still have $t[j_r]>t[j_w]$. Therefore, an error can be on processing of any vertex $v_{j_w}$.

Let us focus on the vertex $v_a$. For computing $f(v_a)$ with no error, we should compute $f(v_{a_1})$ with no error. Here $v_{a_1}\in D_a$ such that maximum is reached on $v_{a_1}$. For computing $f(v_{a_1})$ with no error, we should compute $f(v_{a_2})$ with no error. Here $v_{a_2}\in D_{a_1}$ such that maximum is reached on $v_{a_2}$ and so on. Hence, for solving the problem with no error, we should process only at most  $q$ vertices with no error.  

Therefore, the probability of error for the algorithm is
\[1-\left(1-\left(\frac{1}{2}\right)^{2\log q}\right)^{q}=O\left(\frac{1}{q}\right)
\]
because $\lim\limits_{q\to\infty}\frac{1-\left(1-\frac{1}{q^2}\right)^{q}}{1/q}=1$.
\end{proof}
\section{Quantum Algorithms for Evolution of Boolean Circuits with Shared Inputs and Zhegalkin Polynomial}\label{sec:andor-dag}
Let us apply ideas of quantum dynamic programming algorithms on DAGs to AND-OR-NOT DAGs. 

It is known that any Boolean function can be represented as a Boolean circuit with AND, OR, and NOT gates \cite{y2003,ab2009}. Any such circuit can be represented as a DAG with the following properties:
\begin{itemize}
\item sinks are labeled with variables. We call these vertices ``variable-vertices''.
\item There are no vertices $v_i$ such that $d_i=1$.
\item If a vertex $v_i$ such that $d_i\geq 2$; then the vertex labeled with Conjunction or Disjunction.  We call these vertices ``function-vertices''.
\item Any edge is labeled with $0$ or $1$.
\item There is one particular root vertex $v_s$.
\end{itemize}

The graph represents a Boolean function that can be evaluated in the following way. 
We associate a value $r_i\in\{0,1\}$ with a vertex $v_i$, for $i\in\{1,\dots,n\}$. If $v_i$ is a variable-vertex, then $r_i$ is a value of a corresponding variable. If $v_i$ is a function-vertex labeled by a function $h_i\in\{AND, OR\}$, then $r_i=h_i\left(r_{j_1}^{\sigma(i,j_1)},\dots,r_{j_{w}}^{\sigma(i,j_w)}\right)$, where $w=d_i$, $(v_{j_1},\dots,v_{j_w})=D_i$, $\sigma(i,j)$ is a label of an edge $e=(i,j)$. Here,  we say that $x^1=x$ and $x^0=\neg x$ for any Boolean variable $x$. The result of the evaluation is $r_s$.

An AND-OR-NOT DAG can be evaluated using the following algorithm that is a modification of Algorithm \ref{alg:general-dp}:

\begin{algorithm}
\caption{Quantum Algorithm for AND-OR-NOT DAGs evaluation.
Let $r=(r_1,\dots,r_n)$ be an array that stores the results of functions $h_i$. Let a variable-vertex $v_i$ be labeled by $x(v_i)$, for all $i\in L$. Let $Q_i$ be a quantum algorithm for $h_i$; and $\hat{Q}_i^{2\log_2 \hat{n}}$ be a boosted version of $Q_i$ (Lemma \ref{lm:boosting}).  Let $t_f(j)$ be a function such that $t_f(j)=r_j$, if $j\leq \hat{n}$; $t_f(j)=x(v_j)$, if $j> \hat{n}$. 
}\label{alg:and-or}
\begin{algorithmic}
\State $\textsc{TopSort}(G)$\Comment{If the graph is not ordered}
\For{$i=\hat{n}\dots s$}
  \State $t[i] \gets Q_i^{2\log_2 \hat{n}}(t_f(j_1)^{\sigma(i,j_1)},\dots,t_f(j_w)^{\sigma(i,j_w)})$, where $w=d_i$, $(v_{j_1},\dots,v_{j_{w}})=D_i$.
\EndFor
\State \Return $t[s]$
\end{algorithmic}
\end{algorithm}

Algorithm \ref{alg:and-or} has the following property:
\begin{theorem}\label{th:and-or-dag}
Algorithm \ref{alg:and-or} evaluates an AND-OR-NOT DAG $G=(V,E)$ with running time $O(\sqrt{\hat{n}m}\log \hat{n})=O(\sqrt{nm}\log n)$ and error probability $O(1/\hat{n})$. Here $m=|E|, n=|V|, \hat{n}=n-|L|$, $L$ is the set of sinks. 
\end{theorem}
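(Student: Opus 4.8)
The plan is to derive this theorem as a direct instance of Lemma \ref{lm:dp-general-complx}. Algorithm \ref{alg:and-or} is precisely the general dynamic-programming scheme of Algorithm \ref{alg:general-dp}, restricted to the case where every function-vertex $v_i$ carries a transition function $h_i \in \{AND, OR\}$ and every sink (variable-vertex) contributes its value in constant time. Since $\{AND, OR\} \subseteq \{AND, OR, NAND, MAX, MIN\}$, the hypotheses of Lemma \ref{lm:dp-general-complx} are met, and its conclusions --- running time $O(\sqrt{\hat{n}m}\log\hat{n})$ and error $O(1/\hat{n})$ --- are exactly the bounds claimed here. So the bulk of the work is already done; what remains is to check that the features specific to AND-OR-NOT DAGs do not break the reduction.

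The one genuinely new ingredient is the presence of the edge labels $\sigma(i,j)$, i.e. the NOT operations applied along edges. The argument supplied to $Q_i$ at vertex $v_i$ is $t_f(j_\ell)^{\sigma(i,j_\ell)}$, which is either $t_f(j_\ell)$ or its negation. I would argue that this costs nothing asymptotically: whenever the Grover-based subroutine for $h_i$ queries its $\ell$-th argument, it reads $t_f(j_\ell)$ and negates it according to $\sigma(i,j_\ell)$ in $O(1)$ time, so each argument query still costs $O(1)$ and the per-vertex running time $T_i(d_i) = O(\sqrt{d_i})$ is unchanged. Correctness then follows from processing the vertices in topological order: by the ordering property $i < j$ for every edge $(v_i,v_j)$, all values $t_f(j_1),\dots,t_f(j_{d_i})$ are available (or classically computable) when $v_i$ is processed, so the array $t$ reproduces the intended evaluation $r$, and the returned $t[s]$ equals $r_s$.

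The remaining point, and the one I expect to require the most care, is that the one-sided-error structure needed by the boosting Lemma \ref{lm:boosting} survives the negation. For AND the subroutine searches for a $0$ among the (possibly negated) arguments and for OR it searches for a $1$; in each case the underlying Grover search errs on one side only, so boosting $Q_i$ to $\hat{Q}_i^{2\log_2\hat{n}}$ still drives the per-vertex error down to $O(1/\hat{n}^2)$ at the cost of a $\log\hat{n}$ factor. Granting this, summing the per-vertex times $O(\sqrt{d_i}\log\hat{n})$ over non-sink vertices via the Cauchy--Bunyakovsky--Schwarz inequality gives $O(\sqrt{\hat{n}m}\log\hat{n})$, and combining the $\hat{n}$ independent per-vertex error events with the topological-sort error yields total error $O(1/\hat{n})$, exactly as in Lemma \ref{lm:dp-general-complx}.
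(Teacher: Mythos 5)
Your proposal is correct and follows essentially the same route as the paper: both reduce Theorem \ref{th:and-or-dag} to the general bound of Lemma \ref{lm:dp-general-complx} and observe that the only new feature, the edge labels $\sigma(i,j)$ implementing NOT, can be read by the oracle in constant time per query, so neither the per-vertex running time nor the error analysis of $\hat{Q}_i^{2\log_2\hat{n}}$ changes. Your write-up is in fact somewhat more careful than the paper's, which states this in a few lines without spelling out the preservation of the one-sided-error structure.
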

\begin{proof}
Algorithm \ref{alg:and-or} evaluates the AND-OR-NOT DAG $G$ by the definition of AND-OR-NOT DAG for the Boolean function $F$.
Algorithm \ref{alg:and-or} is almost the same as Algorithm \ref{alg:general-dp}. The difference is labels of edges. At the same time, the Oracle gets information on an edge in constant time. Therefore, the running time and the error probability of $Q_i^{2\log_2 \hat{n}}$ does not change. Hence, using the proof similar to the proof of Lemma \ref{lm:boosting}, we obtain the claimed running time and error probability.
\end{proof}

Another way of a Boolean function representation is a NAND DAG or Boolean circuit with NAND gates. \cite{y2003,ab2009}.
We can represent a NAND formula as a DAG with similar properties as AND-OR-NOT DAG, but function-vertices have only NAND labels. 
At the same time, if we want to use more operations, then we can consider NAND-NOT DAGs and NAND-AND-OR-NOT DAGs:
\begin{theorem}\label{th:nand-dag}
Algorithm \ref{alg:and-or} evaluates a NAND-AND-OR-NOT DAG and a NAND-NOT DAG. If we consider a DAG $G=(V,E)$, then these algorithms work with running time $O(\sqrt{\hat{n}m}\log \hat{n})=O(\sqrt{nm}\log n)$ and error probability $O(1/\hat{n})$. Here $m=|E|, n=|V|, \hat{n}=n-|L|$, $L$ is the set of sinks. 
\end{theorem}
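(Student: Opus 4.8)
The plan is to reuse the proof of Theorem~\ref{th:and-or-dag} almost verbatim, since the only structural change is that the set of admissible labels on function-vertices is enlarged from $\{AND,OR\}$ to $\{NAND,AND,OR\}$, with the NOT operation still carried on the edge labels $\sigma(i,j)$. Algorithm~\ref{alg:and-or} is itself unchanged: when processing a vertex $v_i$ it invokes the boosted subroutine $\hat{Q}_i^{2\log_2\hat{n}}$ for whatever transition function $h_i$ the vertex carries, so the only thing I need to supply is that a NAND-labelled vertex behaves exactly like an AND- or OR-labelled one with respect to running time and error.

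First I would record the quantum subroutine for NAND already described in the ``Functions for Vertices Processing'' subsection: $NAND(x_1,\dots,x_d)$ is AND with a negated output, computable by a single Grover search for a zero among the arguments (output $1$ if a zero is found, and $0$ otherwise). This runs in time $O(\sqrt{d})$ with one-sided error at most $0.5$, exactly as for AND and OR. Crucially, NAND already appears in the function set of Lemma~\ref{lm:boosting}, so the boosted version $\hat{Q}_i^{2\log_2\hat{n}}$ attains error at most $1/\hat{n}^2$ in expected time $O(\sqrt{d_i}\log\hat{n})$ for every function-vertex, independently of whether its label is NAND, AND, or OR.

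With this uniform per-vertex bound in hand, the remaining analysis carries over with no new computation. The running-time argument of Lemma~\ref{lm:dp-general-complx}, which rests on the Cauchy--Bunyakovsky--Schwarz estimate $\sum_{v_i\notin L}\sqrt{d_i}\leq\sqrt{\hat{n}\sum_{v_i\notin L}d_i}=\sqrt{\hat{n}m}$, depends only on the per-vertex bound $O(\sqrt{d_i}\log\hat{n})$ and not on the identity of the gate, so it yields $O(\sqrt{\hat{n}m}\log\hat{n})$ again; the additive $\textsc{TopSort}$ cost is of the same order and does not change the total. The error bound $O(1/\hat{n})$ then follows from Lemma~\ref{lm:dp-err} as before. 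The NAND-NOT DAG is simply the special case in which no AND or OR labels occur, and the NAND-AND-OR-NOT DAG is the full heterogeneous mixture; both are subsumed at once. Rather than a genuine obstacle, the only point worth checking is that mixing gate types causes no interference: each vertex is processed by its own independent invocation, and the DAG is evaluated vertex-by-vertex in topological order, so correctness of the whole evaluation follows exactly as in Theorem~\ref{th:and-or-dag}, each $r_i$ being computed correctly with high probability whenever the relevant subroutines succeed.
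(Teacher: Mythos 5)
Your proposal is correct and follows the same route as the paper, which simply defers to the proofs of Lemma~\ref{lm:dp-general-complx} and Theorem~\ref{th:and-or-dag}; you have merely spelled out the (valid) observation that the NAND subroutine has the same $O(\sqrt{d})$ time and one-sided error bound as AND/OR and is already covered by Lemma~\ref{lm:boosting}, so the Cauchy--Bunyakovsky--Schwarz and error analyses go through unchanged.
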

\begin{proof}
The proof is similar to the proofs of Lemma \ref{lm:dp-general-complx} and Theorem \ref{th:and-or-dag}.
\end{proof}

Theorems \ref{th:and-or-dag} and \ref{th:nand-dag} give us quantum algorithms for Boolean circuits with shared input and non-constant depth. At the same time, existing algorithms \cite{avrz2010,a2007,a2010,bkt2018,ckk2012} are not applicable in the case of shared input and non-constant depth.

The third way of representation of a Boolean function is the Zhegalkin polynomial which is representation using $AND, XOR$ functions and the $0,1$ constants \cite{z1927,gg85,y2003}: for some integers $k,t_1,\dots,t_k$,
\[F(x)=ZP(x)=a\oplus\bigoplus_{i=1}^{k}C_i\mbox{, where }a\in\{0,1\}, C_i=\bigwedge_{z=1}^{t_i} x_{j_z}\]

At the same time, it can be represented as an AND-OR-NOT DAG with a logarithmic depth and shared input or an AND-OR-NOT tree with an exponential number of vertices and separated input. That is why the existing algorithms from \cite{avrz2010,a2007,a2010,bkt2018,ckk2012} cannot be used or work in exponential running time. 

\begin{theorem}\label{th:xor-dag}
Algorithm \ref{alg:and-or} evaluates the XOR-AND DAG $G=(V,E)$ with running time $O(\sqrt{\hat{n}m}\log \hat{n})=O(\sqrt{nm}\log n)$ and error probability $O(1/\hat{n})$. Here $m=|E|, n=|V|, \hat{n}=n-|L|$, $L$ is the set of sinks.
\end{theorem}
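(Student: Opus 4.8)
The plan is to reduce the evaluation of a XOR-AND DAG to the AND-OR-NOT case already settled in Theorem \ref{th:and-or-dag}, keeping all size parameters within constant factors. The one genuinely new issue is that XOR (parity) admits no Grover-type speedup, so a XOR-vertex is \emph{not} among the functions $\{AND,OR,NAND\}$ that Algorithm \ref{alg:and-or} evaluates in $O(\sqrt{d_i})$ time; I would therefore remove every XOR-vertex by replacing it with a constant-size AND-OR-NOT gadget before quoting the earlier theorem.

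First I would use the identity $a\oplus b=(a\vee b)\wedge(\neg a\vee\neg b)$, which realises a binary XOR as one AND-vertex feeding on two OR-vertices, with the two NOT operations pushed onto the edges entering the second OR (exactly the edge-labels $\sigma(i,j)$ that Algorithm \ref{alg:and-or} already supports). In the Zhegalkin representation the top parity $a\oplus\bigoplus_{i=1}^{k}C_i$ is laid out as a balanced binary tree of $O(k)$ binary XOR-gates of logarithmic depth, while each conjunction $C_i=\bigwedge_{z}x_{j_z}$ is kept as a single multi-input AND-vertex. Applying the gadget to every binary XOR-gate turns the whole object into an AND-OR-NOT DAG $G'$; the duplicated references to the two inputs of each gadget are harmless, since Algorithm \ref{alg:and-or} already operates on DAGs with shared inputs.

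Next I would do the size bookkeeping. Each binary XOR-gate is replaced by a constant number of function-vertices and a constant number of edges, so if the XOR-gates of $G$ are binary (the standard circuit model, in which ``$t_x$ exclusive-or operations'' means $t_x$ binary gates), the transformation changes the number of non-sink vertices and the number of edges by at most a constant factor, i.e. $\hat{n}'=O(\hat{n})$ and $m'=O(m)$; the multi-input AND-vertices are left untouched. Substituting these into the bounds of Theorem \ref{th:and-or-dag} applied to $G'$ yields running time $O(\sqrt{\hat{n}'m'}\log\hat{n}')=O(\sqrt{\hat{n}m}\log\hat{n})$ and error probability $O(1/\hat{n}')=O(1/\hat{n})$, which is exactly the claim.

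The step I expect to be delicate is precisely this bookkeeping: one must ensure the expansion does not inflate $\hat{n}$ to $\Theta(m)$ (which would cost a spurious factor $\sqrt{m/\hat{n}}$), and this is what forces the XOR fan-in to be bounded. An equivalent and perhaps cleaner route, which avoids building $G'$ and matches the statement more literally, is to let Algorithm \ref{alg:and-or} evaluate the binary XOR-vertices directly: a binary XOR has $d_i=2$ and is computed deterministically in $O(1)=O(\sqrt{d_i})$ time with zero error, so in the sum $\sum_i T_i(d_i)$ of Lemma \ref{lm:dp-general-complx} it contributes an $O(1)$ term and no error, while the AND-vertices contribute the usual $O(\sqrt{d_i}\log\hat{n})$ time and $O(1/\hat{n}^2)$ error after boosting. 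The Cauchy--Bunyakovsky--Schwarz estimate then gives $O(\sqrt{\hat{n}m}\log\hat{n})$; and, because a XOR needs \emph{both} of its inputs correct (so all $\hat{n}$ vertices must be evaluated without error, as in Lemma \ref{lm:dp-err} rather than the weaker requirement of Lemma \ref{lm:dp-max-compl}), the union bound over the $\hat{n}$ boosted AND-evaluations gives total error $O(1/\hat{n})$.
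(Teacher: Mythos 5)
Your main argument is essentially the paper's own proof: the paper likewise replaces each (binary) XOR vertex by a constant-size AND-OR-NOT gadget --- using the dual identity $a\oplus b = (a\wedge\neg b)\vee(\neg a\wedge b)$, i.e.\ two ANDs, one OR and $6$ edges --- notes $\hat{n}'\leq 3\hat{n}$ and $m'\leq 6m$, and then invokes Theorem~\ref{th:and-or-dag}, exactly as you do. The only differences are cosmetic (you picked the De Morgan--dual gadget $(a\vee b)\wedge(\neg a\vee\neg b)$), and your two added observations --- that the constant-factor bookkeeping implicitly requires bounded XOR fan-in, and that binary XORs could instead be evaluated directly in $O(1)$ time with zero error --- are sound refinements of points the paper leaves implicit.
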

\begin{proof}
 $XOR$ operation is replaced by two $AND$, one $OR$ vertex and $6$ edges because for any Boolean $a$ and $b$ we have $a \oplus b = a \wedge \neg b \vee \neg a \wedge b$. So, we can represent the original DAG as an AND-OR-NOT DAG using $\hat{n}'\leq 3\cdot \hat{n} =O(\hat{n})$ vertices. The number of edges is $m'\leq 6\cdot m=O(m)$. Due to Theorem \ref{th:and-or-dag}, we can construct a quantum algorithm with running time $O(\sqrt{\hat{n}m}\log \hat{n})$ and error probability $O(1/\hat{n})$.
\end{proof}

The previous theorem shows us the existence of a quantum algorithm for Boolean circuits with XOR, NAND, AND, OR, and NOT gates. Let us present the result for the Zhegalkin polynomial.

\begin{corollary}\label{cr:zp}
Suppose that Boolean function $F(x)$ can be represented as Zhegalkin polynomial for some integers $k,t_1,\dots,t_k$: $F(x)=ZP(x)=a\oplus\bigoplus_{i=1}^{k}C_i$, where $a\in\{0,1\},$ $C_i=\bigwedge_{z=1}^{t_i} x_{j_z}$. Then, there is a quantum algorithm for $F$ with running time $O\left(\sqrt{k(k+t_1+\dots+t_k)}\log k\right)$ and error probability $O(1/k)$.
\end{corollary}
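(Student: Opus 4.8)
The plan is to reduce the statement to Theorem \ref{th:xor-dag} by exhibiting an explicit XOR-AND DAG that computes $ZP(x)$ and then bounding its size parameters. First I would build the DAG bottom-up. For each variable $x_j$ occurring in the polynomial I create one sink (variable-vertex); note that the same variable may be shared among several conjunctions, which is precisely why the DAG, not a tree, is the natural representation. For each conjunction $C_i=\bigwedge_{z=1}^{t_i}x_{j_z}$ I introduce one AND function-vertex whose $t_i$ outgoing edges point to the variable-vertices of $x_{j_1},\dots,x_{j_{t_i}}$. To combine the $k$ conjunction outputs I attach a balanced binary tree of $k-1$ XOR function-vertices whose leaves are the $k$ AND-vertices and whose root is the output vertex $v_s$. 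The additive constant $a$ is folded in for free: since $a\oplus y=y$ when $a=0$ and $a\oplus y=\neg y$ when $a=1$, I only need to flip the label of the single edge leaving $v_s$ (equivalently, negate the returned value) when $a=1$, which alters neither the vertex nor the edge count asymptotically.

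Next I would count the size parameters of this DAG. The non-sink vertices are exactly the $k$ AND-vertices together with the $k-1$ XOR-vertices, so $\hat{n}=2k-1=O(k)$; the variable-vertices are sinks and do not contribute to $\hat{n}$. The edges are the $\sum_{i=1}^{k}t_i$ edges feeding the conjunctions together with the $2(k-1)$ edges of the XOR-tree, hence $m=O\!\left(k+t_1+\dots+t_k\right)$.

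Finally I would invoke Theorem \ref{th:xor-dag} on this XOR-AND DAG, which supplies a quantum evaluation algorithm with running time $O(\sqrt{\hat{n}m}\log\hat{n})$ and error probability $O(1/\hat{n})$. Substituting the bounds $\hat{n}=O(k)$ and $m=O(k+t_1+\dots+t_k)$ yields running time $O\!\left(\sqrt{k(k+t_1+\dots+t_k)}\log k\right)$ and error probability $O(1/k)$, as claimed.

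I do not expect a genuine obstacle here: the entire content lies in the reduction, and once the DAG is fixed the bounds are immediate from Theorem \ref{th:xor-dag}. The only points demanding a little care are (i) confirming that the $k$-fold XOR is realized with only $O(k)$ binary XOR-gates, so that $\hat{n}$ stays linear in $k$, and (ii) ensuring the constant $a$ is absorbed without inflating the graph; both are handled above. One should also recall that inside the proof of Theorem \ref{th:xor-dag} each XOR-vertex is internally expanded into two AND-, one OR-vertex and six edges, but this merely multiplies $\hat{n}$ and $m$ by constant factors and is therefore invisible in the asymptotics.
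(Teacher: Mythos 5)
Your proposal is correct and follows essentially the same route as the paper: represent each $C_i$ as a single AND vertex with $t_i$ outgoing edges, combine the $k$ conjunctions with $k-1$ binary XOR vertices (each of which Theorem \ref{th:xor-dag} expands into two AND, one OR vertex and six edges), obtain $\hat{n}=O(k)$ and $m=O(k+t_1+\dots+t_k)$, and apply Theorem \ref{th:xor-dag}. The only differences are cosmetic — you use a balanced XOR tree where the paper implicitly uses a chain, and you explicitly absorb the constant $a$ into an edge label, a detail the paper leaves unstated.
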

\begin{proof}
Let us present $C_i$ as one $AND$ vertex with $t_i$ outgoing edges. $XOR$ operation is replaced by two $AND$, one $OR$ vertex and $6$ edges. So, $m=6\cdot (k-1) + t_1 +\dots + t_k=O(k+t_1+\dots+t_k)$, $\hat{n}=3\cdot (k-1) + k=O(k)$.
Due to Theorem \ref{th:xor-dag}, we obtain the claimed properties.
\end{proof}
\section{The Quantum Algorithm for the Single Source Longest and Shortest Paths Problems for a Weighted DAG and the Diameter Search Problem for Unweighted DAG}\label{sec:diameter}

In this section, we consider two problems for DAGs.
\subsection{The Single Source Longest Path Problem for Weighted DAG}
Let us apply the approach to the Single Source Longest Path problem.

Let us consider a weighted DAG $G=(V,E)$ and the weight of an edge $e=(v_i,v_j)$ is $w(i,j)$, for $i,j\in \{1,\dots,n\}, e\in E$.

Let us have a vertex $v_s$, and we should compute $t[1],\dots,t[n]$. Here $t[i]$ is the length of the longest path from $v_s$ to $v_i$. If a vertex $v_i$ is not reachable from $v_s$ then $t[i]=-\infty$.

Let us present the algorithm for the longest paths lengths computing. 

\begin{algorithm}
\caption{Quantum Algorithm for the Single Source Longest Path Search problem.
Let $t=(t[1],\dots,t[n])$ be an array that stores results for vertices. Let $Q$ be the D{\"u}rr-H{\o}yer quantum algorithm for $MAX$ function. Let $\hat{Q}^{2\log_2 (n)}$ be a boosted version of $Q$  (Lemma \ref{lm:boosting}).  
}\label{alg:longest-path}
\begin{algorithmic}
\State $\textsc{TopSort}(G)$
\State $t\gets (-\infty,\dots,-\infty)$
\State $t[s]\gets 0$
\For{$i=s+1\dots n$}
  \State $t[i] \gets \hat{Q}^{2\log_2 n}(t[j_1]+w(i,j_1),\dots,t[j_w]+w(i,j_w))$, where $w=|D'_i|$, $(v_{j_1},\dots,v_{j_{w}})=D'_i$.
\EndFor
\State \Return $t$
\end{algorithmic}
\end{algorithm}

Algorithm \ref{alg:longest-path} has the following property:
\begin{theorem}\label{th:longest-path}
Algorithm \ref{alg:longest-path} solves the Single Source Longest Path Search problem with expected running time $O(\sqrt{nm}\log n)$ and error probability $O(1/n)$.
\end{theorem}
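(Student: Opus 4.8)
The plan is to treat Algorithm~\ref{alg:longest-path} as an instance of the generic $MAX$-based dynamic program analyzed in Lemma~\ref{lm:dp-general-complx}, read off the complexity and error bounds from there, and separately confirm that the specific recurrence computes longest paths. Up to reversing edge orientations (the algorithm iterates forward and reads the in-neighbor lists $D'_i$ rather than backward over the out-neighbor lists $D_i$), the two algorithms coincide, and all the counting arguments transfer verbatim with $d_i$ replaced by the in-degree $d'_i$.

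\textbf{Correctness.} First I would show by induction on the topological index that, conditioned on every invocation of $\hat{Q}^{2\log_2 n}$ returning the true maximum, each $t[i]$ equals the longest-path length from $v_s$ to $v_i$. The base case is $t[s]=0$, set explicitly. For $i>s$, the topological sort guarantees $j<i$ for every in-neighbor $v_j\in D'_i$, so each such $t[j]$ is already computed and, by the induction hypothesis, correct; the recurrence $t[i]=\max_{j}\bigl(t[j]+w(j,i)\bigr)$ then produces the correct value. Vertices with index at most $s$ other than the seed are unreachable from $v_s$, since all edges point forward in topological order, so their initial entry $-\infty$ is already correct and they are left untouched.

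\textbf{Complexity and error.} Next I would bound the cost of processing $v_i$ using Lemma~\ref{lm:boosting} with $k=2\log_2 n$, giving expected time $O(\sqrt{d'_i}\,\log n)$ and per-vertex error at most $1/2^{2\log_2 n}=1/n^2$. Summing over the at most $n$ processed vertices and invoking the Cauchy--Bunyakovsky--Schwarz inequality exactly as in Lemma~\ref{lm:dp-general-complx}, together with the identity $\sum_i d'_i=m$ (the in-degrees also sum to the number of edges), gives total expected time $O\bigl(\log n\sqrt{n\sum_i d'_i}\bigr)=O(\sqrt{nm}\log n)$; the \textsc{TopSort} preprocessing contributes a further $O(\sqrt{\hat{n}m}\log\hat{n})$ by Lemma~\ref{lm:topsort-compl} and is absorbed. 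For the error, correctness of the whole table follows whenever all $MAX$ computations and the topological sort succeed, so by Lemma~\ref{lm:dp-err} (with $\varepsilon=1/n^2$) the total failure probability is at most $1-(1-1/n)(1-1/n^2)^{n}=O(1/n)$, using the limit estimate already established in Lemma~\ref{lm:dp-general-complx}.

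\textbf{Main obstacle.} I expect the one genuinely delicate point to be the error analysis, because the inputs fed to each $MAX$ are themselves outputs of earlier noisy $MAX$ calls. The clean way around this is the one-sided-error observation of Lemma~\ref{lm:dp-max-compl}: a failing D{\"u}rr--H{\o}yer call only returns a value no larger than the truth, so an erroneous non-maximal predecessor never inflates the current maximum. Consequently it suffices that each vertex's own $MAX$ call succeed, which is exactly what the union bound above controls; no stronger joint event is required, and the forward/in-neighbor orientation plays no role in the bound beyond the bookkeeping substitution $d_i\mapsto d'_i$.
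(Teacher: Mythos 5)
Your proposal is correct and follows essentially the same route as the paper: confirm that the $MAX$ recurrence over the in-neighbor lists $D'_i$ computes the longest-path lengths, then read off the running time and error probability from the generic dynamic-programming analysis (the paper simply cites Lemmas \ref{lm:dp-time} and \ref{lm:dp-err}, where you unfold them via Lemma \ref{lm:boosting}, the Cauchy--Bunyakovsky--Schwarz step, and the identity $\sum_i d'_i=m$). The only differences are cosmetic: the paper proves correctness by a short exchange/contradiction argument rather than induction, and your closing one-sided-error remark, while true, is not needed once you condition on all boosted $MAX$ calls (and the topological sort) succeeding.
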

\begin{proof}
Let us prove the correctness of the algorithm. In fact, the algorithm computes $t[i]=\max(t[j_1]+w(i,j_1),\dots,t[j_w]+w(i,j_w))$. Assume that $t[i]$ is less than the length of the longest path. Then there is $v_z\in D_i'$ that precedes $v_i$ in the longest path. Therefore, the length of the longest path is $t[z]+w(i,z)>t[i]$. This claim contradicts the definition of $t[i]$ as the maximum. The bounds for the running time and the error probability follow from Lemmas \ref{lm:dp-time}, \ref{lm:dp-err}. 
\end{proof}

\subsection{The Single Source Shortest Path Problem for Weighted DAG}
Let us apply the approach to the Single Source Shortest Path problem.

Let us consider a weighted DAG $G=(V,E)$ and the weight of an edge $e=(v_i,v_j)$ is $w(i,j)$, for $i,j\in \{1,\dots,n\}, e\in E$.

Suppose we have a vertex $v_s$, and we should compute $t[1],\dots,t[n]$. Here $t[i]$ is the length of the shortest path from $v_s$ to $v_i$. If a vertex $v_i$ is not reachable from $v_s$ then $t[i]=+\infty$.

Let us present the algorithm for the shortest paths lengths computing. 

\begin{algorithm}
\caption{Quantum Algorithm for the Single Source Shortest Path Search problem.
Let $t=(t[1],\dots,t[n])$ be an array that stores results for vertices. Let $Q$ be the D{\"u}rr-H{\o}yer quantum algorithm for $MIN$ function. Let $\hat{Q}^{2\log_2 (n)}$ be a boosted version of $Q$  (Lemma \ref{lm:boosting}).  
}\label{alg:shortest-path}
\begin{algorithmic}
\State $\textsc{TopSort}(G)$
\State $t\gets (+\infty,\dots,+\infty)$
\State $t[s]\gets 0$
\For{$i=s+1\dots n$}
  \State $t[i] \gets \hat{Q}^{2\log_2 n}(t[j_1]+w(i,j_1),\dots,t[j_w]+w(i,j_w))$, where $w=|D'_i|$, $(v_{j_1},\dots,v_{j_{w}})=D'_i$.
\EndFor
\State \Return $t$
\end{algorithmic}
\end{algorithm}

Algorithm \ref{alg:shortest-path} has the following property:
\begin{theorem}\label{th:shortest-path}
Algorithm \ref{alg:shortest-path} solves the Single Source Shortest Path Search problem with expected running time $O(\sqrt{nm}\log n)$ and error probability $O(1/n)$.
\end{theorem}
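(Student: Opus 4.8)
The plan is to mirror the argument for Theorem~\ref{th:longest-path}, since Algorithm~\ref{alg:shortest-path} differs from Algorithm~\ref{alg:longest-path} only in replacing $\mathrm{MAX}$ by $\mathrm{MIN}$ and the initialization $-\infty$ by $+\infty$. First I would establish correctness by induction on the topological order produced by $\textsc{TopSort}(G)$. The key structural fact is that for every edge $e=(v_z,v_i)\in E$ we have $z<i$, so when the loop reaches index $i$ (processing vertices in the order $s+1,\dots,n$), every in-neighbor $v_z\in D'_i$ has already been assigned its value $t[z]$.

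Next I would prove the inductive step. For a vertex $v_i$ reachable from $v_s$, any shortest path must arrive at $v_i$ through some in-neighbor $v_z\in D'_i$, and its length then equals the shortest-path length to $v_z$ plus $w(z,i)$, which by the induction hypothesis equals $t[z]+w(z,i)$. Conversely, for every in-neighbor $v_z$ the quantity $t[z]+w(z,i)$ is the length of an actual walk from $v_s$ to $v_i$, so it cannot be smaller than the true shortest-path length. Taking the minimum over all $v_z\in D'_i$ therefore yields $t[i]=\min\bigl(t[j_1]+w(i,j_1),\dots,t[j_w]+w(i,j_w)\bigr)$ equal to the shortest-path length. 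For a vertex not reachable from $v_s$, all in-neighbor values are $+\infty$, so $t[i]=+\infty$, as required; the base case $t[s]=0$ is set explicitly.

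For the running time I would invoke Lemma~\ref{lm:dp-time} and Lemma~\ref{lm:boosting} exactly as in the proof of Lemma~\ref{lm:dp-general-complx}: with boosting parameter $k=2\log_2 n$ each $\mathrm{MIN}$ computation over the $d'_i$ incoming values runs in expected $O(\sqrt{d'_i}\log n)$ time, and a Cauchy--Bunyakovsky--Schwarz estimate gives $\sum_i\sqrt{d'_i}\le\sqrt{n\sum_i d'_i}=\sqrt{nm}$, using $\sum_i d'_i=m=|E|$. Adding the $O(\sqrt{\hat{n}m}\log\hat{n})$ cost of $\textsc{TopSort}$ leaves the total expected running time at $O(\sqrt{nm}\log n)$. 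The error bound follows from Lemma~\ref{lm:dp-err} with per-vertex error $\varepsilon(n)=1/n^2$, yielding $O(1/n)$ exactly as in Lemma~\ref{lm:dp-general-complx}.

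The step I expect to require the most care is the bookkeeping for the in-neighbor iteration: the complexity lemmas of Section~\ref{sec:algorithm} were stated in terms of the out-neighbor lists $D_i$ and out-degrees $d_i$, whereas Algorithm~\ref{alg:shortest-path} recurses on the reverse adjacency $D'_i$. I would check that the identity $\sum_i d'_i=\sum_i d_i=m$ makes the Cauchy--Schwarz bound transfer verbatim, and that the topological-order property guarantees each in-neighbor is processed before $v_i$, so no value is read before it is computed. A secondary point, needed only if one prefers the sharper error analysis of Lemma~\ref{lm:dp-max-compl}, is to verify that a failed boosted $\mathrm{MIN}$ can only overestimate a vertex's value; since an overestimate can never spuriously become the minimum, errors off the optimal path do not propagate --- the exact mirror of the $\mathrm{MAX}$ argument. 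The generic Lemma~\ref{lm:dp-err} already suffices for the claimed $O(1/n)$ bound, so this refinement is optional.
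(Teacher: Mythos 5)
Your proposal is correct and follows essentially the same route as the paper: correctness via the standard predecessor argument for the DP recurrence (the paper argues by contradiction, you by induction on the topological order, which is the same idea), with the running time and error bounds drawn from Lemmas~\ref{lm:boosting}, \ref{lm:dp-time}, and \ref{lm:dp-err}. In fact your write-up is more careful than the paper's, whose proof of Theorem~\ref{th:shortest-path} is a verbatim copy of the longest-path proof (still speaking of ``max'' and ``longest path'') and which never addresses the in-degree versus out-degree bookkeeping ($\sum_i d'_i = m$) that you correctly flag as the point needing verification.
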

\begin{proof}
Let us prove the correctness of the algorithm. In fact, the algorithm computes $t[i]=\max(t[j_1]+w(i,j_1),\dots,t[j_w]+w(i,j_w))$. Assume that $t[i]$ is less than the length of the longest path. Then there is $v_z\in D_i'$ that precedes $v_i$ in the longest path. Therefore, the length of the longest path is $t[z]+w(i,z)>t[i]$. This claim contradicts the definition of $t[i]$ as the maximum. The bounds for the running time and the error probability follow from Lemmas \ref{lm:dp-time}, \ref{lm:dp-err}. 
\end{proof}

\subsection{The Diameter Search Problem for an Unweighted DAG}

Let us consider an unweighted DAG $G=(V,E)$. Let $len(i,j)$ be the length of the shortest path between $v_i$ and $v_j$. If the path does not exist, then $len(i,j)=-1$.  The diameter of the graph $G$ is $diam(G)=\max\limits_{i,j\in\{1,\dots,|V|\}}len(i,j)$.  
For a given graph $G=(V,E)$, we should find the diameter of the graph.

It is easy to see that the diameter is the length of a path between a non-sink vertex and some other vertex. If this fact is false, then the diameter is $0$. 

Using this fact, we can present the algorithm. 

\begin{algorithm}
\caption{Quantum Algorithm for the Diameter Search problem.
Let $t^z=(t^z[1],\dots,t^z[n])$ be an array that stores the shortest paths from vertices to vertex $v_z\in V\backslash L$. Let $Q$ be the D{\"u}rr-H{\o}yer quantum algorithm for the $MIN$ function. Let $\hat{Q}^{2\log_2 (n)}$ be a boosted version of $Q$  (Lemma \ref{lm:boosting}). 
}\label{alg:diam}
\begin{algorithmic}
\State $max\_len\gets -\infty$
\For{$z=\hat{n}\dots 1$}
\State $t^z\gets (+\infty,\dots,+\infty)$
\State $t^z[z]\gets 0$
\For{$i=z+1\dots n$}
  \State $t^z[i] \gets \hat{Q}^{2\log_2 n}(t^{j_1}[i],\dots,t^{j_w}[i])+1$, where $w=|D'_i|$, $(v_{j_1},\dots,v_{j_{w}})=D'_i$.
  \If{$t^z[i]>max\_len$}
  \State $max\_len\gets t^z[i]$
  \EndIf
\EndFor
\EndFor
\State $diam(G)=max\_len$
\State \Return $diam(G)$
\end{algorithmic}
\end{algorithm}
    
Algorithm \ref{alg:diam} has the following property:
\begin{theorem}\label{th:diam}
Algorithm \ref{alg:diam} solves the Diameter Search problem with expected running time $O(\hat{n}(n+\sqrt{nm})\log n)$ and error probability $O(1/n)$.
\end{theorem}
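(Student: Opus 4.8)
The plan is to split the statement into correctness, running time, and error probability, and to reduce the first two to results already available for the single-source shortest path routine (Algorithm~\ref{alg:shortest-path}, Theorem~\ref{th:shortest-path}).

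For correctness, I would first note that $diam(G)=\max_{z}\max_{i}len(z,i)$ and that, by the fact stated just before the algorithm, it suffices to let the source $v_z$ range only over the $\hat n$ non-sink vertices, since a sink has no outgoing edge and can realize only $len(z,i)\le 0$. For a fixed source $v_z$ the inner loop is exactly Algorithm~\ref{alg:shortest-path} specialized to unit edge weights, computing $t^z[i]=len(z,i)$ whenever no MIN invocation errs; the inductive argument of Theorem~\ref{th:shortest-path} carries over verbatim, because any predecessor $v_j\in D'_i$ lying on a shortest path forces $t^z[j]+1=t^z[i]$ and contradicts minimality if $t^z[i]$ were smaller. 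Since $max\_len$ is updated against every $t^z[i]$, its final value is $\max_{z,i}len(z,i)=diam(G)$.

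For the running time I would bound a single outer iteration and multiply by $\hat n$. Initializing $t^z$ and scanning the updates of $max\_len$ costs $O(n)$. The inner loop performs, for each $i$, one boosted D{\"u}rr--H{\o}yer MIN over the $d'_i=|D'_i|$ in-neighbours at expected cost $O(\sqrt{d'_i}\log n)$ by Lemma~\ref{lm:boosting}. Summing over $i$ and applying the Cauchy--Bunyakovsky--Schwarz inequality gives $\sum_i\sqrt{d'_i}\le\sqrt{n\sum_i d'_i}=\sqrt{nm}$, since $\sum_i d'_i=|E|=m$; hence one inner loop costs $O(\sqrt{nm}\log n)$ and one outer iteration costs $O((n+\sqrt{nm})\log n)$. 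Multiplying by the $\hat n$ sources yields the claimed $O(\hat n(n+\sqrt{nm})\log n)$; a single topological sort, if the graph is not already ordered, adds only $O(\sqrt{\hat n m}\log\hat n)$ and is absorbed.

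The error probability is the step I expect to be the main obstacle, and it is genuinely more delicate than the single-source case. The D{\"u}rr--H{\o}yer MIN has one-sided error, so an erroneous call can only return a value at least as large as the true minimum; by induction every computed $t^z[i]$ satisfies $t^z[i]\ge len(z,i)$, with equality forced only when the MINs along a shortest path are all correct. Thus the algorithm can fail only by \emph{overestimating}: a single faulty MIN may route the implicit back-pointer through a non-minimal predecessor and manufacture a path strictly longer than $diam(G)$ (a short edge discarded in favour of a long detour), or may even return an $+\infty$ entry of an unreachable in-neighbour and inflate the answer to $+\infty$. This is precisely why the economical accounting of Lemma~\ref{lm:dp-max-compl}, which charges the error to one critical chain, does \emph{not} transfer here: \emph{any} of the $\Theta(\hat n n)$ MIN invocations can corrupt the outer maximum. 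I would therefore fall back on a union bound over all invocations; to reach the stated $O(1/n)$ one must boost each MIN to error $O(1/n^{3})$, i.e. take $k=3\log_2 n$ in Lemma~\ref{lm:boosting}, so that $O(\hat n n)\cdot O(1/n^{3})=O(\hat n/n^{2})=O(1/n)$ using $\hat n\le n$. The only remaining care is to confirm that no other randomness (there is none beyond the MIN calls, as one initial topological sort contributes at most $O(1/\hat n)$ by Lemma~\ref{lm:topsort-compl}) disturbs this estimate.
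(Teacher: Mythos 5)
Your proposal is correct in substance and takes a genuinely different, much more careful route than the paper, whose entire proof is a two-sentence citation: correctness ``similar to Theorem~\ref{th:longest-path}'', running time and error ``follow from Lemmas~\ref{lm:dp-time}, \ref{lm:dp-err}''. Your running-time accounting (Cauchy--Bunyakovsky--Schwarz per source, $\sum_i\sqrt{d'_i}\le\sqrt{nm}$, multiplied by $\hat n$ sources, plus $O(n)$ per source for initialization, plus one topological sort) is what that citation implicitly amounts to, so there the two arguments coincide. The genuine divergence is the error analysis, and here you have caught a real defect in the paper's reasoning: Lemma~\ref{lm:dp-err} charges for $\hat n$ subroutine invocations, whereas Algorithm~\ref{alg:diam} makes $\Theta(\hat n n)$ boosted MIN invocations, so with the parameter $k=2\log_2 n$ written in the algorithm (per-call error $1/n^2$) a union bound yields only $O(\hat n/n)=O(1)$, not the claimed $O(1/n)$. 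Your further observation that the critical-chain accounting of Lemma~\ref{lm:dp-max-compl} cannot close this gap --- MIN errors inflate entries and the outer maximization picks up inflations, unlike the longest-path setting where one-sided MAX errors only deflate values that the maximum then ignores --- is exactly right; for instance, in a complete DAG (edges $(v_i,v_j)$ for all $i<j$, diameter $1$) a single faulty MIN at almost any position already produces a value of at least $2$. Your remedy, taking $k=3\log_2 n$ so that $O(\hat n n)\cdot O(1/n^3)=O(1/n)$, costs only a constant factor in the running time and appears necessary if one argues from the stated per-call guarantees; the paper never addresses any of this.

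One gap remains, and you share it with the paper while only half-noticing it: the $+\infty$ entries are not purely an error artifact. If some vertex $v_i$ with $i>z$ is unreachable from a non-sink vertex $v_z$ (in particular if $v_i$ has in-degree $0$, or all its in-neighbors precede $v_z$ in the order), then even an error-free run sets $t^z[i]=+\infty$, and the update step pushes $+\infty$ into $max\_len$; the algorithm as written then outputs $+\infty$ rather than $diam(G)$, so your concluding sentence that the final value of $max\_len$ equals $\max_{z,i} len(z,i)$ fails (recall the problem defines $len(z,i)=-1$ for unreachable pairs, while the computed entry is $+\infty$). The patch is easy --- update $max\_len$ only when $t^z[i]$ is finite --- and with it both your correctness induction and your $O(1/n)$ error bound go through; but it must be stated, since without it the theorem is false as literally written for any DAG containing such an unreachable ordered pair.
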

\begin{proof}
The correctness of the algorithm can be proven similar to the proof of Theorem \ref{th:longest-path}. The bounds for the running time and the error probability follow from Lemmas \ref{lm:dp-time}, \ref{lm:dp-err}. 
\end{proof}

\paragraph*{Acknowledgements.} 
A part of the study was supported by Kazan Federal University for the state assignment in the sphere of
scientific activities, project No. 0671-2020-0065. 

We thank Andris Ambainis, Alexander Rivosh, and Aliya Khadieva for their help and useful discussions.


\begin{thebibliography}{99}

\bibitem{nc2010}
M.~A. Nielsen and I.~L. Chuang,  {\em Quantum computation and quantum
  information}{\nolinebreak\hspace{0.1em}}, Cambridge univ. press (2010).

\bibitem{a2017}
A.~Ambainis, ``{\em Understanding quantum algorithms via query complexity},''
  {\em Proc. Int. Conf. of Math. 2018}{\nolinebreak\hspace{0.1em}},   {\bf 4},
  3283--3304 (2018).

\bibitem{aazksw2019part1}
F.~Ablayev, M.~Ablayev, J.~Z. Huang, K.~Khadiev, N.~Salikhova, and D.~Wu,
  ``{\em On quantum methods for machine learning problems part i: Quantum
  tools},'' {\em Big Data Mining and Analytics}~{\bf 3}(1),  41--55 (2019).

\bibitem{dw2001}
R.~de~Wolf,  {\em Quantum computing and communication
  complexity}{\nolinebreak\hspace{0.1em}} (2001).

\bibitem{quantumzoo}
S.~Jordan, ``{\em Quantum algorithms zoo},'' (2021).
\newblock http://quantumalgorithmzoo.org/.

\bibitem{kiv2022}
K.~Khadiev, A.~Ilikaev, and J.~Vihrovs, ``{\em Quantum algorithms for some
  strings problems based on quantum string comparator},'' {\em
  Mathematics}~{\bf 10}(3),  377 (2022).

\bibitem{kb2022}
K.~Khadiev and C.~M.~B. Machado, ``{\em {Quantum algorithm for the shortest
  superstring problem}},'' {\em International Conference on Micro- and
  Nano-Electronics 2021}{\nolinebreak\hspace{0.1em}},   {\bf 12157},  579 --
  586, International Society for Optics and Photonics, SPIE (2022).

\bibitem{ke2022}
K.~Khadiev and S.~Enikeeva, ``{\em {Quantum version of self-balanced binary
  search tree with strings as keys and applications}},'' {\em International
  Conference on Micro- and Nano-Electronics 2021}{\nolinebreak\hspace{0.1em}},
   {\bf 12157},  587 -- 594, International Society for Optics and Photonics,
  SPIE (2022).

\bibitem{kr2021a}
K.~Khadiev and V.~Remidovskii, ``{\em Classical and quantum algorithms for
  constructing text from dictionary problem},'' {\em Natural Computing}~{\bf
  20}(4),  713--724 (2021).

\bibitem{kr2021b}
K.~Khadiev and V.~Remidovskii, ``{\em Classical and quantum algorithms for
  assembling a text from a dictionary},'' {\em NONLINEAR PHENOMENA IN COMPLEX
  SYSTEMS}~{\bf 24}(3),  207--221 (2021).

\bibitem{kk2021}
K.~Khadiev and D.~Kravchenko, ``{\em Quantum algorithm for dyck language with
  multiple types of brackets},'' {\em Unconventional Computation and Natural
  Computation}{\nolinebreak\hspace{0.1em}},   68--83 (2021).

\bibitem{abikkpssv2020}
A.~Ambainis, K.~Balodis, J.~Iraids, K.~Khadiev, V.~Kļevickis, K.~Prūsis,
  Y.~Shen, J.~Smotrovs, and J.~Vihrovs, ``{\em {Quantum Lower and Upper Bounds
  for 2D-Grid and Dyck Language}},'' {\em 45th International Symposium on
  Mathematical Foundations of Computer Science (MFCS
  2020)}{\nolinebreak\hspace{0.1em}},  {\em Leibniz International Proceedings
  in Informatics (LIPIcs)} {\bf 170},  8:1--8:14 (2020).

\bibitem{kksk2020}
D.~Kravchenko, K.~Khadiev, D.~Serov, and R.~Kapralov, ``{\em
  Quantum-over-classical advantage in solving multiplayer games},'' {\em
  Lecture Notes in Computer Science}~{\bf 12448},  83--98 (2020).

\bibitem{ki2019}
K.~Khadiev and A.~Ilikaev, ``{\em Quantum algorithms for the most frequently
  string search, intersection of two string sequences and sorting of strings
  problems},'' {\em International Conference on Theory and Practice of Natural
  Computing}{\nolinebreak\hspace{0.1em}},   234--245 (2019).

\bibitem{kks2019}
K.~Khadiev, D.~Kravchenko, and D.~Serov, ``{\em On the quantum and classical
  complexity of solving subtraction games},'' {\em Proceedings of CSR
  2019}{\nolinebreak\hspace{0.1em}},  {\em LNCS} {\bf 11532},  228--236 (2019).

\bibitem{an2009}
A.~Ambainis and N.~Nahimovs, ``{\em Improved constructions of quantum
  automata},'' {\em Theoretical Computer Science}~{\bf 410}(20),  1916--1922
  (2009).

\bibitem{av2009}
F.~Ablayev and A.~Vasiliev, ``{\em On quantum realisation of boolean functions
  by the fingerprinting technique},'' {\em Discrete Mathematics and
  Applications}~{\bf 19}(6),  555--572 (2009).

\bibitem{agky16}
F.~Ablayev, A.~Gainutdinova, K.~Khadiev, and A.~Yakary{\i}lmaz, ``{\em Very
  narrow quantum \mbox{OBDD}s and width hierarchies for classical
  \mbox{OBDD}s},'' {\em Lobachevskii Journal of Mathematics}~{\bf 37}(6),
  670--682 (2016).

\bibitem{aakk2018}
F.~Ablayev, A.~Ambainis, K.~Khadiev, and A.~Khadieva, ``{\em Lower bounds and
  hierarchies for quantum memoryless communication protocols and quantum
  ordered binary decision diagrams with repeated test},'' {\em In SOFSEM 2018,
  LNCS}~{\bf 10706},  197--211 (2018).

\bibitem{aakv2018}
F.~Ablayev, M.~Ablayev, K.~Khadiev, and A.~Vasiliev, ``{\em Classical and
  quantum computations with restricted memory},'' {\em LNCS}~{\bf 11011},
  129--155 (2018).

\bibitem{kkm2018}
K.~Khadiev, A.~Khadieva, and I.~Mannapov, ``{\em Quantum online algorithms with
  respect to space and advice complexity},'' {\em Lobachevskii Journal of
  Mathematics}~{\bf 39}(9),  1210--1220 (2018).

\bibitem{kk2017}
K.~Khadiev and A.~Khadieva, ``{\em Reordering method and hierarchies for
  quantum and classical ordered binary decision diagrams},'' {\em CSR
  2017}{\nolinebreak\hspace{0.1em}},  {\em LNCS} {\bf 10304},  162--175,
  Springer (2017).

\bibitem{ikpy2018}
R.~Ibrahimov, K.~Khadiev, K.~Pr\={u}sis, and A.~Yakary{\i}lmaz, ``{\em
  Error-free affine, unitary, and probabilistic \mbox{OBDD}s},'' {\em Lecture
  Notes in Computer Science}~{\bf 10952 LNCS},  175--187 (2018).

\bibitem{ikpy2021}
R.~Ibrahimov, K.~Khadiev, K.~Pr{\=u}sis, and A.~Yakary{\i}lmaz, ``{\em
  Error-free affine, unitary, and probabilistic obdds},'' {\em International
  Journal of Foundations of Computer Science}~{\bf 32}(7),  849--860 (2021).

\bibitem{l2009}
F.~Le~Gall, ``{\em Exponential separation of quantum and classical online space
  complexity},'' {\em Theory of Computing Systems}~{\bf 45}(2),  188--202
  (2009).

\bibitem{kkzmkry2022}
K.~Khadiev, A.~Khadieva, M.~Ziatdinov, I.~Mannapov, D.~Kravchenko, A.~Rivosh,
  and R.~Yamilov, ``{\em Two-way and one-way quantum and classical automata
  with advice for online minimization problems},'' {\em Theoretical Computer
  Science}  (2022).

\bibitem{kk2022}
K.~Khadiev and A.~Khadieva, ``{\em Quantum and classical log-bounded automata
  for the online disjointness problem},'' {\em Mathematics}~{\bf 10}(1) (2022).

\bibitem{kk2019}
K.~Khadiev and A.~Khadieva, ``{\em Two-way quantum and classical machines with
  small memory for online minimization problems},'' {\em International
  Conference on Micro- and Nano-Electronics 2018}{\nolinebreak\hspace{0.1em}},
  {\em Proc. SPIE} {\bf 11022},  110222T (2019).

\bibitem{kk2020}
K.~Khadiev and A.~Khadieva, ``{\em Two-way quantum and classical automata with
  advice for online minimization problems},'' {\em Formal Methods. FM 2019
  International Workshops}{\nolinebreak\hspace{0.1em}},   428--442 (2020).

\bibitem{kk2019disj}
K.~Khadiev and A.~Khadieva, ``{\em Quantum online streaming algorithms with
  logarithmic memory},'' {\em International Journal of Theoretical
  Physics}~{\bf 60},  608--616 (2021).

\bibitem{cormen2001}
T.~H. Cormen, C.~E. Leiserson, R.~L. Rivest, and C.~Stein,  {\em Introduction
  to Algorithms}{\nolinebreak\hspace{0.1em}}, McGraw-Hill (2001).

\bibitem{z1927}
I.~Zhegalkin, ``{\em On the technique of calculating propositions in symbolic
  logic (sur le calcul des propositions dans la logique symbolique)},'' {\em
  Matematicheskii Sbornik (in Russian and French)}~{\bf 34}(1),  9--28 (1927).

\bibitem{gg85}
S.~G. Gindikin, S.~Gindikin, and S.~G. Gindikin,  {\em Algebraic
  logic}{\nolinebreak\hspace{0.1em}}, Springer Science \& Business Media
  (1985).

\bibitem{y2003}
S.~V. Yablonsky,  {\em Introduction to Discrete Mathematics: Textbook for
  Higher Schools}{\nolinebreak\hspace{0.1em}}, Mir Publishers (1989).

\bibitem{g96}
L.~K. Grover, ``{\em A fast quantum mechanical algorithm for database
  search},'' {\em Proceedings of the twenty-eighth annual ACM symposium on
  Theory of computing}{\nolinebreak\hspace{0.1em}},   212--219, ACM (1996).

\bibitem{bbht98}
M.~Boyer, G.~Brassard, P.~H{\o}yer, and A.~Tapp, ``{\em Tight bounds on quantum
  searching},'' {\em Fortschritte der Physik}~{\bf 46}(4-5),  493--505 (1998).

\bibitem{dh96}
C.~D{\"u}rr and P.~H{\o}yer, ``{\em A quantum algorithm for finding the
  minimum},'' {\em arXiv:quant-ph/9607014}  (1996).

\bibitem{dhhm2004}
C.~D{\"u}rr, M.~Heiligman, P.~H{\o}yer, and M.~Mhalla, ``{\em Quantum query
  complexity of some graph problems},'' {\em International Colloquium on
  Automata, Languages, and Programming}{\nolinebreak\hspace{0.1em}},
  481--493, Springer (2004).

\bibitem{dhhm2006}
C.~D{\"u}rr, M.~Heiligman, P.~H{\o}yer, and M.~Mhalla, ``{\em Quantum query
  complexity of some graph problems},'' {\em SIAM Journal on Computing}~{\bf
  35}(6),  1310--1328 (2006).

\bibitem{as2006}
A.~Ambainis and R.~{\v{S}}palek, ``{\em Quantum algorithms for matching and
  network flows},'' {\em Annual Symposium on Theoretical Aspects of Computer
  Science}{\nolinebreak\hspace{0.1em}},   172--183, Springer (2006).

\bibitem{d2009}
S.~D{\"o}rn, ``{\em Quantum algorithms for matching problems},'' {\em Theory of
  Computing Systems}~{\bf 45}(3),  613--628 (2009).

\bibitem{d2008phd}
S.~D{\"o}rn, {\em Quantum complexity of graph and algebraic problems}, PhD
  thesis, Universit{\"a}t Ulm (2008).

\bibitem{avrz2010}
A.~Ambainis, A.~M. Childs, B.~W. Reichardt, R.~{\v{S}}palek, and S.~Zhang,
  ``{\em Any and-or formula of size n can be evaluated in time $n^{1/2}+o(1)$
  on a quantum computer},'' {\em SIAM Journal on Computing}~{\bf 39}(6),
  2513--2530 (2010).

\bibitem{a2007}
A.~Ambainis, ``{\em A nearly optimal discrete query quantum algorithm for
  evaluating nand formulas},'' {\em arXiv preprint arXiv:0704.3628}  (2007).

\bibitem{a2010}
A.~Ambainis, ``{\em Quantum algorithms for formula evaluation.},'' {\em
  https://arxiv.org/abs/1006.3651}  (2010).

\bibitem{bkt2018}
M.~Bun, R.~Kothari, and J.~Thaler, ``{\em Quantum algorithms and approximating
  polynomials for composed functions with shared inputs},'' {\em arXiv preprint
  arXiv:1809.02254}  (2018).

\bibitem{ckk2012}
A.~M. Childs, S.~Kimmel, and R.~Kothari, ``{\em The quantum query complexity of
  read-many formulas},'' {\em European Symposium on
  Algorithms}{\nolinebreak\hspace{0.1em}},   337--348, Springer (2012).

\bibitem{ks2019}
K.~Khadiev and L.~Safina, ``{\em Quantum algorithm for dynamic programming
  approach for dags. applications for zhegalkin polynomial evaluation and some
  problems on dags},'' {\em Proceedings of UCNC
  2019}{\nolinebreak\hspace{0.1em}},  {\em LNCS} {\bf 4362},  150--163 (2019).

\bibitem{f2008}
B.~Furrow, ``{\em A panoply of quantum algorithms},'' {\em Quantum Information
  \& Computation}~{\bf 8}(8),  834--859 (2008).

\bibitem{k2014}
R.~Kothari, ``{\em An optimal quantum algorithm for the oracle identification
  problem},'' {\em 31st International Symposium on Theoretical Aspects of
  Computer Science}{\nolinebreak\hspace{0.1em}},   482 (2014).

\bibitem{ll2015}
C.~Y.-Y. Lin and H.-H. Lin, ``{\em Upper bounds on quantum query complexity
  inspired by the elitzur-vaidman bomb tester},'' {\em 30th Conference on
  Computational Complexity (CCC 2015)}{\nolinebreak\hspace{0.1em}},  Schloss
  Dagstuhl-Leibniz-Zentrum fuer Informatik (2015).

\bibitem{ll2016}
C.~Y.-Y. Lin and H.-H. Lin, ``{\em Upper bounds on quantum query complexity
  inspired by the elitzur--vaidman bomb tester},'' {\em Theory of
  Computing}~{\bf 12}(18),  1--35 (2016).

\bibitem{kkmsy2022}
R.~Kapralov, K.~Khadiev, J.~Mokut, Y.~Shen, and M.~Yagafarov, ``{\em Fast
  classical and quantum algorithms for online k-server problem on trees},''
  {\em CEUR Workshop Proceedings}~{\bf 3072},  287--301 (2022).

\bibitem{bhmt2002}
G.~Brassard, P.~H{\o}yer, M.~Mosca, and A.~Tapp, ``{\em Quantum amplitude
  amplification and estimation},'' {\em Contemporary Mathematics}~{\bf 305},
  53--74 (2002).

\bibitem{gr2005}
L.~K. Grover and J.~Radhakrishnan, ``{\em Is partial quantum search of a
  database any easier?},'' {\em Proceedings of the seventeenth annual ACM
  symposium on Parallelism in algorithms and
  architectures}{\nolinebreak\hspace{0.1em}},   186--194, ACM (2005).

\bibitem{ab2009}
S.~Arora and B.~Barak,  {\em Computational complexity: a modern
  approach}{\nolinebreak\hspace{0.1em}}, Cambridge University Press (2009).

\end{thebibliography}
\end{document}